\newtheorem{example}{Example}
\DeclareSymbolFont{extraup}{U}{zavm}{m}{n}
\DeclareMathSymbol{\varheart}{\mathalpha}{extraup}{86}
\DeclareMathSymbol{\vardiamond}{\mathalpha}{extraup}{87}
\newcommand{\comp}{\mathsf{c}}
\newcommand{\red}[1]{\textcolor{black}{#1}}
\begin{document}
%\title{Ensemble-Tight Second-Order Asymptotics for Guessing Input Sequence Decoding with Abandonment} 

%\title{Ensemble-Tight Second-Order Asymptotics, Error Exponents and Strong Converse Exponents for Guessing-Based Decoding with Abandonment} 

\title{Ensemble-Tight Second-Order Asymptotics and  Exponents  for Guessing-Based Decoding with Abandonment} 

% %%% Single author, or several authors with same affiliation:
% \author{%
%  \IEEEauthorblockN{Author 1 and Author 2}
% \IEEEauthorblockA{Department of Statistics and Data Science\\
%                    University 1\\
 %                   City 1\\
  %                  Email: author1@university1.edu}% }

%%% Several authors with up to three affiliations:
\author{Vincent Y. F. Tan, \emph{Senior Member,~IEEE}  $\qquad$ Hamdi Joudeh,~\emph{Member,~IEEE}
\thanks{V. Y. F. Tan is with the Department of Mathematics and Department of Electrical and Computer Engineering,
National University of Singapore, 119077, Singapore (e-mail: vtan@nus.edu.sg). H.~Joudeh is with the Department of Electrical Engineering,  Eindhoven University of Technology (e-mail: h.joudeh@tue.nl). 

V.~Y.~F.~Tan is supported by a Singapore Ministry of Education (MOE) AcRF Tier~2 grant (A-8000423-00-00) and two AcRF Tier~1 grants (A-8000980-00-00 and A-8002934-00-00). H.~Joudeh is supported by a European Research Council (ERC) Starting Grant (No. 101116550).

This paper was presented in part at ISIT  2025.
}}

\maketitle

\begin{abstract}
This paper considers guessing-based decoders with abandonment for discrete memoryless channels in which all codewords have the same composition. This class of decoders rank-orders all input sequences in the codebook's composition class  from ``closest'' to ``farthest'' from the channel output and then queries them sequentially in that order for codebook membership. Decoding terminates when a codeword is encountered or when a predetermined number of guesses is reached, and decoding is abandoned. We  derive ensemble-tight first-order asymptotics for the code rate and abandonment rate, which shows that guessing-based decoding is more efficient than conventional testing-based decoding whenever the capacity of the channel  exceeds  half the entropy of the capacity-achieving input distribution.  The main focus of this paper is on refined asymptotics, specifically, second-order asymptotics, error exponents, and strong converse exponents. The optimal second-order region is characterized in terms of the minimum of the second-order code and abandonment rates. The error (resp.\ strong converse) exponent  is characterized in terms of the minimum (resp.\ maximum) of the usual channel coding exponent and an abandonment exponent, which turns out to be a special case of the exponent of conditional almost-lossless source coding.
%The exponents are characterized by the usual channel coding exponents but evaluated at the maximum of $R$ and the difference between $ H(P_X) $ and the abandonment rate $r$.
\end{abstract}
  
\section{Introduction}
%%%%%%
\label{sec:introduction}
%%%%%%
Consider a discrete memoryless channel (DMC) with finite input and output alphabets $\mathcal{X}$ and $\mathcal{Y}$, respectively, and with  transition probability law $W: \mathcal{X} \to \mathcal{Y}$.
%%%%%%
A transmitter wishes to communicate a message from the set $[M_n] := \{1,2,\ldots,M_n\}$ to a receiver 
over $n$ uses of the DMC.
%%%%%%
To communicate $w \in [M_n]$, the transmitter sends a corresponding length-$n$ codeword $\bm{x}(w) \in \mathcal{X}^n$. 
%%%%%%
We assume that the employed codebook $\mathcal{C}_n :=  \{ \bm{x}(1), \bm{x}(2), \ldots, \bm{x}(M_n) \}$ has fixed composition, i.e., all codewords come from the same type class.
%%%%%%

%%%%%%
In proving coding theorems for the DMC, a standard approach is to analyze a randomly generated codebook $\mathcal{C}_n$ in tandem with a decoding procedure that, given a channel output $\bm{y} \in \mathcal{Y}^n$, tests all codewords in $\mathcal{C}_n$ and selects one which maximizes some decoding metric. 
%%%%%%
Notable special cases include standard maximum likelihood (ML) decoding \cite{Gallager1968} and maximum mutual information (MMI) decoding \cite{Goppa}. 
%%%%%%
We shall refer to this conventional approach as \emph{testing-based decoding}, due to its roots in $M_n$-ary hypothesis testing.
%%%%%%

In this paper, we follow a different approach to channel decoding based on guessing (or  guesswork) \cite{Massey1994,Arikan1996}, which works as follows.
%%%%%%%
Given a channel output $\bm{y}$, the decoder rank-orders all sequences $\bm{x}$, codewords or otherwise, in the codebook's type class from ``closest'' to ``farthest'' from $\bm{y}$ with respect to some ``distance'' measure, and then queries them sequentially in that order for codebook membership. 
%%%%%%
Decoding stops at the first instance a codeword is encountered, or if a predetermined maximum number of guesses $m_n$ is reached. In the latter case, search is abandoned and an error is declared.
%%%%%%
We shall refer to this approach as \emph{guessing-based decoding with abandonment}, or \emph{guessing-based decoding} for short. 
%%%%%%

In a somewhat loose sense, the above procedure searches for the closest codeword to the output $\bm{y}$, contained in an input-space preimage of some neighborhood of $\bm{y}$.
%%%%%%
Intuitively, if the DMC is not too noisy, then it suffices to choose a maximum number of guesses $m_n$ that is much smaller than the number of codewords $M_n$; rendering guessing-based decoding more efficient than testing-based decoding in terms of search complexity (i.e., the number of queried sequences).\footnote{An implicit underlying assumption here is that there are efficient routines for rank-ordering and codebook membership checking.}
%%%%%%%
This intuition is made concrete later on in this paper, and is done so via three asymptotic regimes—second-order asymptotics~\cite{Tan2014, Hayashi09}, error exponents (or reliability function) \cite{Gallager1968,Csi97}, and strong converse exponents~\cite{arimoto73,DK79}.

The guessing-based decoding approach we describe above is directly inspired by guessing random additive noise decoding (GRAND), proposed by Duffy \emph{et al.} \cite{Duffy2019} for discrete modulo-sum channels, where the output is a modulo-sum of the input and independent noise (all drawn from the same alphabet).\footnote{Note that the idea of decoding by guessing the noise pattern has a long history predating GRAND, e.g., syndrome decoding of linear codes.}
%%%%%%%
GRAND rank-orders noise sequences from most to least likely and subtracts them sequentially from the channel output, at each stage checking if the result is a codeword, and stopping at the first encountered codeword or if a maximum number of guesses is reached.
%%%%%%%
Various extensions of GRAND have appeared since then, e.g., \cite{Solomon2020,Duffy2021,Duffy2022,An2022,An2023,Joudeh2024}, mostly focusing on Gaussian channels and practical implementations. The original discrete setting in \cite{Duffy2019} is most relevant to our current work.

%%%%%%%
Viewed in light of the guessing-based decoding procedure described earlier, and under memoryless noise, GRAND can be seen as a special case that exploits the structure of modulo-sum channels, i.e., an ordered list of noise sequences induces an ordered list of input sequences by subtraction from the channel output sequence.
%%%%%%%
A main motivation of our current paper is to extend the GRAND paradigm to general 
DMCs without the additive structure. 
%%%%%%%
\subsection{Contributions} 
%%%%%%%
For a channel code that employs guessing-based decoding with abandonment, define the \emph{code rate} and the  \emph{abandonment rate} respectively by 
\begin{equation*}
    R_n := \frac{1}{n} \log M_n \ \ \text{and} \ \ r_n := \frac{1}{n} \log m_n.
\end{equation*}
%%%%%%%
For any given decoding error probability, it is clearly desirable to make $R_n$ as large as possible (efficiency) while keeping $r_n$ as small as possible (complexity).
%%%%%%
Our three main contributions in this paper are as follows. 
\begin{enumerate} %[wide, labelwidth=!, labelindent=0pt]
    \item We characterize the asymptotic first- and second-order fundamental limits for $R_n$ and $r_n$ for non-vanishing error probabilities. These characterizations, as well as those involving exponents below, involve  direct (achievability) and ensemble converse parts. The latter is deemed appropriate in this setting since we consider a \emph{pre-specified} and \emph{fixed} class of coding schemes with random constant composition codebooks and guessing-based decoders. The first-order result suggests that guessing-based decoding is more efficient than testing-based decoding whenever the capacity $C(W)$ of the DMC $W$ exceeds half the entropy of the capacity-achieving input distribution (CAID) $H(P_X)$. We show that the optimal second-order region is characterized in terms of the {\em minimum} of the second-order code and abandonment rates.
    \item We also characterize ensemble-tight error exponents for guessing-based decoding with abandonment. This regime is of interest for code rates $R=R_n$ {\em below} the channel capacity $C(W)$ and abandonment rates $r=r_n$ above the conditional entropy $H(X|Y)$ (evaluated at the CAID). Denoting the usual random coding exponent for constant composition codes~\cite{Csi97} at rate $R$ as $E_{\mathrm{r}}(R,P_X)$, where $P_X$ here is the composition limiting distribution, our result shows that the error exponent for our setting when $R$ is larger than the critical rate $R_{\mathrm{cr}}$ and $r$ is smaller than $H(P_X)- R_{\mathrm{cr}}$ is   $E_{\mathrm{r}}(\max\{R , H(P_X)-r\},P_X)$, underscoring again that the performance of the system is dominated by either the code rate $R$ being large or the abandonment rate~$r $ being small. 
    \item Finally, considering code rates {\em above} $C(W)$ or abandonment rates {\em below} $H(X|Y)$, we derive ensemble-tight strong converse exponents for the decoder under consideration. Denoting a sphere-packing variant of the strong converse exponent~\cite{DK79, arimoto73, Omura1975} at rate $R$ and a composition limiting distribution $P_X$ as $K_{\mathrm{sp}}(R,P_X)$, analogous to the error exponent setting, our result shows that the  ensemble strong converse exponent is $K_{\mathrm{sp}}( \max\{R, H(P_X)-r\},P_X)$. %Enroute to proving this result, we remedy an error in a paper by Oohama~\cite{oohama15}, showing that two representations of the strong converse exponent (in primal forms) are equal. This auxiliary result (Lemma~\ref{lem:K_equal}) may be of independent interest.
\end{enumerate}
%In this paper, we characterize the asymptotic first- and second-order fundamental limits for $R_n$ and $r_n$ for non-vanishing  error probabilities.
%%%%%%
%This characterization involves a direct (achievability part) and an ensemble converse part, which is appropriate in this setting since we consider a \emph{pre-specified} and \emph{fixed} class of coding schemes with random constant composition codebooks and guessing-based decoders.
%%%%%%
Taken together, the above results suggest that either the code rate or the abandonment rate dominates the behavior of the ensemble error probability in the three asymptotic regimes considered. 
To the best of the authors' knowledge, this is the first work that conducts second-order and comprehensive exponent  analyses, with ensemble converses, for guessing-based decoding with abandonment~\cite{Duffy2019}.
%%%%%%

\red{Before we proceed, it is worth noting that while the original work by Duffy \emph{et al.}~\cite{Duffy2019} also treats error and success probability exponents for guessing-based decoding, the analysis there is only applicable to the class of {\em symmetric  additive} channels, and is limited to achievability results with no converses (see also~\cite{Joudeh2024}).
Moreover, the analysis of abandonment in \cite{Duffy2019} is focused on the case $r = H(X|Y)$, and does not consider the whole range of abandonment rates. On the flip side, the tools we use in the current paper are applicable mainly to memoryless channels, while those used in \cite{Duffy2019, Joudeh2024} apply more broadly to channels with memory (albeit in the restricted class of modulo-sum channels).}
%%%%%%
\subsection{Notation}
 Random variables (e.g., $X$) and their realizations (e.g.,~$x$) are in upper and lower case, respectively. Their supports are denoted by calligraphic font (e.g.,~$\mathcal{X}$) and the  cardinality of the finite set $\mathcal{X}$ is denoted as $|\mathcal{X} |$. 
Let  $\bm{X}$, whose length should be clear from the context, be the vector of random variables $(X_1, X_2, \ldots, X_n)$ and $\bm{x} = (x_1, x_2, \ldots, x_n)$ denote a realization of $\bm{X}$. The set of  probability mass functions (PMFs) supported on a finite alphabet $\mathcal{X}$ is denoted as $\mathcal{P}(\mathcal{X})$. The set of all conditional distributions with input and output alphabets $\mathcal{X}$ and  $\mathcal{Y}$ respectively is denoted as $\mathcal{P}(\mathcal{Y}|\mathcal{X})$. The joint and output distributions induced by a PMF  $P \in \mathcal{P}(\mathcal{X})$ and a channel (DMC) $W \in \mathcal{P}(\mathcal{Y}|\mathcal{X})$ is $P\times W  \in \mathcal{P}(\mathcal{X}\times \mathcal{Y})$ and $PW\in\mathcal{P}(\mathcal{Y})$ respectively. %The marginal output distribution induced by $P$ and $W$ is denoted as $PW(b) = \sum_{a\in\mathcal{X}}P(a)W(b|a)$ for $b\in\mathcal{Y}$.

The {\em type}, {\em composition}, or {\em empirical distribution} of a length-$n$ sequence $\bm{x}$ is the PMF
$\{ P_{\bm{x}}(a)=\frac{1}{n}\sum_{i=1}^n\mathbbm{1} [x_i=a]:a\in\mathcal{X}\}$.
The set of all types formed from length-$n$ sequences,  called {\em $n$-types}, is denoted as $\mathcal{P}_n(\mathcal{X})$. The set of all sequences with type $P \in \mathcal{P}_n(\mathcal{X})$ is the {\em type class} $\mathcal{T}_P$.  Given a sequence $\bm{x}\in \mathcal{T}_P$, the set of all sequences $\bm{y} \in \mathcal{Y}^n$ such that $( \bm{x}, \bm{y} )$ has joint type $P\times V$ is the {\em $V$-shell}, denoted as $\mathcal{T}_V(\bm{x})$. The conditional distribution $V$ is also known as the {\em conditional type} of $\bm{y}$ given $\bm{x}$. Conditional types of sequences $\bm{x}$ given a sequence $\bm{y}$, which arise when analyzing reverse channels, are denoted by $V_{X|Y}$ for clarity. We let $\mathcal{V}_n(\mathcal{Y};P)$  be the set of $V \in \mathcal{P}(\mathcal{Y}|\mathcal{X})$ for which the $V$-shell of a sequence of type $P \in \mathcal{P}_n(\mathcal{X})$ is non-empty. %Let $\mathcal{P}_n(\mathcal{Y}| \mathcal{X}) = \cup_{P\in  \mathcal{P}_n(\mathcal{X})}\mathcal{V}_n(\mathcal{Y};P)$ be the set of all conditional types. 
%%%%%
%{\color{red}Conditional types of sequences $\bm{x}$ given a sequence $\bm{y}$, which arise when analyzing reverse channels, are denoted by $V_{X|Y}$ for clarity.}

We use standard notation for information-theoretic quantities such as conditional entropy $H(Y|X)$ and mutual information $I(X;Y)$, which we also write in terms of the underlying distributions as $H(W|P)$ and $I(P,W)$ if $(X,Y)\sim P\times W$. To emphasize the distributions of the random variables, we sometimes also use a subscript on the information-theoretic quantity, e.g., $H_{P_X\times W}(X|Y)$. Given a DMC $W \in \mathcal{P} ( \mathcal{Y}| \mathcal{X})$, the {\em information capacity} is 
$C(W) := \max_{ P \in \mathcal{P}(\mathcal{X})}I(P,W)$.
The set of CAIDs is
$\Pi(W) := \left\{ P\in \mathcal{P}(\mathcal{X}) : I(P,W)=C(W) \right\}$. 
The {\em conditional information variance} is 
\begin{align}
    V(P,W) := \E \left[ \mathrm{Var} \Big(\log\frac{W(\cdot |X)}{PW(\cdot )}\,\Big|\, X \Big)  \right] .%\left[\log\frac{W(y|x)}{PW(y)}-D(W(\cdot|x)\| PW) \right]^2
\end{align}
The outer expectation is taken with respect to (w.r.t.) $X\sim P$, while the inner variance is taken w.r.t.\ $Y|X\sim W(\cdot|X)$.
Following the convention in~\cite{TomTan13}, we define the {\em $\varepsilon$-dispersion} of $W$ as 
\begin{align}
    V_\varepsilon(W) := \left\{ \begin{array}{cc}
         V_{\min}(W)& \mbox{if }\varepsilon<1/2 \\
         V_{\max}(W)& \mbox{if }\varepsilon\ge 1/2
    \end{array} \right.  ,
\end{align}
where $V_{\min}(W):= \min_{P\in\Pi(W)}V(P,W)$ and $V_{\max}(W):= \max_{P\in\Pi(W)}V(P,W)$.

%In this paper, we often have to evaluate information-theoretic quantities evaluated at types formed from sequences. To distinguish such quantities from information-theoretic quantities of random variables or general probability mass functions,  we   use hats to denote them. 

The {\em empirical conditional entropy} of $\bm{y} \in \mathcal{Y}^n$ given $\bm{x} \in \mathcal{X}^n$ is written as 
$\hat{H}(\bm{y}|\bm{x}) := H(V | P)$ where $(\bm{x},\bm{y})\in \mathcal{T}_{P\times V}$. 
The {\em empirical mutual information} between $\bm{x}$ and $\bm{y}$ is 
    $\hat{I}( \bm{x}\wedge  \bm{y}) := I(P,W)$ where $P_{ \bm{x},\bm{y} }=P\times W$.
Given $s,t\in\mathbb{R}$, we write $s\wedge t$ and  $\min\{s,t\}$ interchangeably. The uniform distribution over a set $\mathcal{A}$ is $\mathrm{Unif}(\mathcal{A})$. Finally, the   complementary cumulative distribution function of a standard Gaussian is denoted as 
\begin{align}
    \mathrm{Q}(z) = \int_{z}^\infty\frac{1}{\sqrt{2\pi}}\e^{-u^2/2}\, \mathrm{d}u. 
\end{align}
%%%%%%%%%%%%%%%%%%%%%%%%%%%%%%%%%%%%%%%%%
%%%%%%%%%%%%%%%%%%%%%%%%%%%%%%%%%%%%%%%%%%
\section{Setting and Preliminaries}
%%%%%%%%%%%
We consider a DMC as described in Section \ref{sec:introduction}, and a coding scheme with a constant composition codebook $\mathcal{C}_n$ of $M_n$ length-$n$ codewords and a guessing-based decoder with at most $m_n$ guesses.
%%%%%%%%%%
We denote the composition (i.e., type) of $\mathcal{C}_n$ by $P_X^{(n)} \in \mathcal{P}_n(\mathcal{X})$.
%%%%%%%%%
Given $\bm{y}$, the decoder rank-orders all $\bm{x} \in \mathcal{T}_{P_X^{(n)}}$ and queries them one by one for membership in $\mathcal{C}_n$, until a codeword if found or guessing is abandoned.
%%%%%%%

We are interested in universal decoding rules that do not explicitly depend on the channel transition law, and hence we propose to rank input sequences in an increasing order of their empirical conditional entropy $\hat{H}(\bm{x} | \bm{y})$, breaking ties arbitrarily.
%%%%%%%
Let $\hat{G}:\mathcal{T}_{P_X^{(n)}} \times \mathcal{Y}^n \to [1:  | \mathcal{T}_{P_X^{(n)}} |]$ be a rank function such that $\hat{G}(\bm{x}|\bm{y})$ is the order in which $\bm{x}$ is guessed given that $\bm{y}$ is received. 
%%%%%%%
The rank function of choice satisfies
%%%%%%%
\begin{align}
\label{eq:rank_function_condition_1}
    \hat{G}(\bm{x}|\bm{y}) < \hat{G}(\bm{x}'|\bm{y}) & \implies \hat{H}(\bm{x}|\bm{y}) \leq \hat{H}(\bm{x}'|\bm{y}) \\
\label{eq:rank_function_condition_2}    
     \hat{H}(\bm{x}|\bm{y}) < \hat{H}(\bm{x}'|\bm{y})  & \implies \hat{G}(\bm{x}|\bm{y}) < \hat{G}(\bm{x}'|\bm{y})
\end{align}
%%%%%
for any pair of input sequences $\bm{x}$ and $\bm{x}'$ from  $\mathcal{T}_{P_X^{(n)}}$.
%%%%%%

With the above rank function, if guessing is allowed to proceed without abandonment, by choosing $m_n = |\mathcal{T}_{P_X^{(n)}}|$, then it is not difficult to verify that the decoded codeword minimizes the empirical conditional entropy, which is known as minimum entropy (ME) decoding. Since we are using constant composition codes, this is equivalent to maximizing the empirical mutual information $\hat{I}(\bm{x}\wedge\bm{y})$, known as maximum-mutual information (MMI) decoding \cite{Goppa}.
%%%%%%
MMI decoding \cite{Goppa} is known to achieve capacity, the reliability function above the critical rate \cite{Csi97}, and optimal second-order asymptotics when used in combination with an appropriately chosen sequence of constant composition codes \cite{wang2011}.
%%%%%%
The main question we wish to answer is that of how small can we make the maximum number of guesses $m_n$ (or the abandonment rate $r_n$), while still achieving near-MMI performances? 
%%%%%%
To this end, we seek to characterize the set of achievable code-abandonment rate pairs $(R_n,r_n)$ in the asymptotic regime of large $n$. What we mean by this will be made precise in what follows.
%%%%%%
\subsection{Decoding Rule and Ensemble Error Probability}
\label{sec:error_prb_rates}
%%%%%%
To simplify the error analysis, we shall assume that ties in the empirical conditional entropy are also declared as errors. Such events can be detected by slightly modifying the guessing procedure as follows. If a codeword $\bm{x}$ is encountered, then guessing continues until all \emph{equivalent} conditional type classes are exhausted, and the decoder declares an error if at least one more codeword is found.
%%%%%%
By ``equivalent'' conditional type classes, we mean all $\mathcal{T}_{V_{X|Y}}(\bm{y})$ such that the conditional entropies $H(V_{X|Y}|\hat{P}_{\bm{y}})$ are equal.
%%%%%%
Moreover, abandonment is not allowed to occur before equivalent conditional type classes are exhausted. 
%%%%%%
With this in mind, it helps to define the slightly modified rank function
%%%%%%
\begin{equation}
\label{eq:modified_rank}
    G(\bm{x} | \bm{y}) = \sum_{V_{X|Y}:H(V_{X|Y}|\hat{P}_{\bm{y}})\leq \hat{H}(\bm{x}|\bm{y})} \big| \mathcal{T}_{V_{X|Y}}(\bm{y}) \big|.
\end{equation}
%%%%%%
Note that $G(\cdot|\bm{y})$ is not one-to-one; this   is in contrast to $\hat{G}(\cdot|\bm{y})$.

%%%%%%
Given that $\bm{x}(w) \in \mathcal{C}_n$ is sent, and with the modified rank function in \eqref{eq:modified_rank}, an error is made if the channel produces an output $\bm{y}$ such that $G(\bm{x}(\bar{w}) | \bm{y}) \leq G(\bm{x}(w) | \bm{y})$ for some $\bar{w} \in [M_n]$ other than $w$, or $G(\bm{x}(w)|\bm{y}) > m_n $.
%%%%%
Given the codebook $\mathcal{C}_n$, let $\varepsilon_n(\mathcal{C}_n)$ be the error probability of the guessing-based decoder averaged over messages.
%%%%%
An {\em $(n,R_n,r_n)$-code} is specified by a codebook $\mathcal{C}_n$ of length-$n$ codewords and rate $R_n$, and a guessing-based decoder with abandonment rate $r_n$. This code has corresponding error probability $\varepsilon_n(\mathcal{C}_n)$.
%%%%%

We consider random coding ensembles in which each codeword $\bm{X}(w)$, $w\in [M_n]$, is drawn uniformly at random from a type class $\mathcal{T}_{P_X^{(n)}}$ independent of other codewords. Furthermore, the sequence of types $\{P_X^{(n)}\}_{n\in\mathbb{N}}$ is assumed to converge to a particular distribution  $P_X\in\mathcal{P}(\mathcal{X})$ (oftentimes a CAID). In this case, the codebook $\mathsf{C}_n$ is random and the  {\em ensemble average error probability} 
\begin{align}
\varepsilon_n := \E[ \varepsilon_n(\mathsf{C}_n)] = \sum_{\mathcal{C}_n}\Prb[ \mathsf{C}_n=\mathcal{C}_n ]\varepsilon_n(\mathcal{C}_n)    
\end{align}
 is computed by taking a further average over $\mathcal{C}_n$. We call such a random code ensemble  an {\em $(n,R_n,r_n, P_X^{(n)})$-code}. 

%%%%%
%%%%%%%%%%%
%%%%%%%%%%%

\subsection{Alternative Ensembles and Decoding Rules}\label{sec:alt_ensembles}
 For the remainder of this paper, we focus exclusively on:
\begin{enumerate}%[leftmargin=*]
    \item[(i)] Ensembles of codes with fixed composition $P_X^{(n)}$;
    \item[(ii)] The use of a guessing-based decoder with abandonment, employing the modified rank function~\eqref{eq:modified_rank}.
\end{enumerate}
%%%%%%%
 Before proceeding, one might wonder how the results would be impacted by adopting variations of our setting such as the following:
\begin{enumerate}%[leftmargin=*]
    \item[(I)] Ensembles of i.i.d.\ random codes with distribution $P_X$;
    \item[(II)] The use of a guessing-based decoder with abandonment based on a non-universal criterion, e.g., maximum-likelihood or maximum a posteriori decoding.
\end{enumerate}
Given the well-known similarities between (i) and (I) across various asymptotic regimes (such as the second-order regime), as well as between (ii) and (II), we expect that the theorems to follow will remain valid if we replace (i) with (I), (ii) with (II), or both.
\section{Ensemble-Tight First-Order Rates}
%%%%%
Here we present a first-order rate result for guessing-based decoding as described in previous sections. While this may be of interest in its own right, our main goal is to derive a first-order \emph{capacity} pair, around which we ``center'' the second-order rate analysis in the next section. In this and next section, $P_X$ always denotes a CAID.  To commence, we formally define the notion of first-order rates. 
\begin{definition}
\red{The pair of first-order rates $(R,r) \in \mathbb{R}_+^2$ is  {\em $(\varepsilon,P_X)$-achievable} if  the sequence of $(n,R_n,r_n,  P_X^{(n)})$-codes  satisfies (i) $P_X^{(n)}$ converges to the CAID $P_X$ and (ii) the code and abandonment rates and random-coding error probabilities $\{\varepsilon_n\}_{n\in\mathbb{N}}$   satisfy
\begin{equation*}
    \liminf_{n \to \infty} R_n  \geq R, \ \  \limsup_{n \to \infty} r_n  \leq r, \  \text{and} \  \limsup_{n \to \infty} \varepsilon_{n}  \le \varepsilon.
\end{equation*}
The closure of the set of all $(\varepsilon, P_X)$-achievable first-order rates for the DMC $W$ is denoted as $\mathcal{R}_{\varepsilon}^*(P_X,W)$. }
\end{definition}

%%%%%
%\begin{theorem} \label{thm:first-order}
%For any $P_X \in \mathcal{P}(\mathcal{X})$, let $\mathcal{R}(P_X)$ be the set of all rate pairs $(R,r)$ that satisfy
%    \begin{equation*}
%        0 \leq R \leq I(P_X,W) \ \text{and} \ H_{P_X \times W}(X|Y) \leq r \leq H(P_X).
%    \end{equation*}
%$\mathcal{R}(P_X)$ is achievable using $P_X$-constant-composition codebooks and guessing decoding. Moreover, $\mathcal{R}(P_X)$ is ensemble-tight for the $P_X$-constant-composition codebook ensemble. 
%\end{theorem}
%%%%%%%
\begin{theorem} \label{thm:first-order} 
Let $P_X\in \Pi(W)$. Then, for all $\varepsilon \in [0,1)$,
    \begin{align}
     \mathcal{R}_\varepsilon^*(P_X,W)\! =\!\bigg\{ (R , r)  :\parbox[c]{1.78in}{$\qquad\qquad\quad\;\;\, 0\le R\le C(W)$, \vspace{0.04 in}\\  $H_{P_X\times W}(X|Y)\le r\le H(P_X)$} \!\!  \bigg\}.
     \end{align}
\end{theorem}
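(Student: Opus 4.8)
The plan is to establish the two inclusions separately, with the achievability (direct) part showing that every $(R,r)$ in the stated region is $\varepsilon$-achievable for the random constant-composition ensemble, and the ensemble converse showing that no pair outside the region can be $\varepsilon$-achievable. Throughout I would work with the type $P_X^{(n)}$ converging to the CAID $P_X$, so that all relevant information quantities ($I(P_X^{(n)},W)$, $H(P_X^{(n)})$, $H(P_X^{(n)}\times W)(X|Y)$) converge to their limiting counterparts by continuity; this reduces everything to showing that the ``capacity pair'' is exactly $(C(W), H_{P_X\times W}(X|Y))$ in a first-order sense.

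For the \emph{achievability} direction, I would decompose the ensemble error event into two contributions corresponding to the two ways an error occurs under the modified rank function \eqref{eq:modified_rank}: (i) a \emph{confusion error}, where some wrong codeword $\bm{x}(\bar w)$ has rank no larger than the true $\bm{x}(w)$, i.e.\ $\hat H(\bm{x}(\bar w)|\bm{y}) \le \hat H(\bm{x}(w)|\bm{y})$ — this is precisely the MMI/ME-decoding error event, which is known to have vanishing probability whenever $R < C(W) = I(P_X,W)$; and (ii) an \emph{abandonment error}, where $G(\bm{x}(w)|\bm{y}) > m_n$, meaning more than $m_n$ sequences in $\mathcal{T}_{P_X^{(n)}}$ lie in conditional type classes at least as ``close'' to $\bm{y}$ as the true codeword. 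The key estimate here is that, with high probability over the channel, the true pair $(\bm{x}(w),\bm{y})$ has joint type close to $P_X\times W$, so $\hat H(\bm{x}(w)|\bm{y}) \approx H_{P_X\times W}(X|Y)$; and the number of sequences $\bm{x}'\in\mathcal{T}_{P_X^{(n)}}$ with $\hat H(\bm{x}'|\bm{y}) \le \hat H(\bm{x}(w)|\bm{y})$ is, to first exponential order, $\exp\{n \hat H(\bm{x}(w)|\bm{y})\}$ — this follows from a standard type-counting argument (summing $|\mathcal{T}_{V_{X|Y}}(\bm{y})| \doteq \exp\{n H(V_{X|Y}|\hat P_{\bm{y}})\}$ over conditional types with bounded entropy, dominated by the largest one, with only a polynomial number of conditional types). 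Hence choosing $r_n$ with $\liminf r_n > H_{P_X\times W}(X|Y)$ makes the abandonment probability vanish. A union bound over the two error types, plus the fact that $r \le H(P_X)$ is automatically satisfiable since $|\mathcal{T}_{P_X^{(n)}}| \doteq \exp\{nH(P_X)\}$ (so $m_n = |\mathcal{T}_{P_X^{(n)}}|$ already gives $r_n \to H(P_X)$ and zero abandonment), closes this direction even for $\varepsilon = 0$.

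For the \emph{ensemble converse}, I need to show that if $R > C(W)$ or $r < H_{P_X\times W}(X|Y)$, then $\limsup_n \varepsilon_n \ge \varepsilon$ for all $\varepsilon < 1$ — in fact the ensemble error tends to $1$. If $R > C(W)$, this is the strong converse for MMI decoding with constant-composition codes, which fails above $I(P_X,W) = C(W)$; alternatively one can invoke the standard strong converse for the DMC since $R_n > C(W)$ forces $\varepsilon_n \to 1$ regardless of decoder. If instead $r < H_{P_X\times W}(X|Y)$, I would argue directly: condition on a sent codeword $\bm{x}(w)$; by the law of large numbers (or concentration for types), the received $\bm{y}$ has $\hat H(\bm{x}(w)|\bm{y})$ concentrated near $H_{P_X\times W}(X|Y) > r$ with probability approaching $1$; but then $G(\bm{x}(w)|\bm{y}) \ge |\mathcal{T}_{V_{X|Y}}(\bm{y})|$ for $V_{X|Y}$ the actual conditional type, and this shell alone has size $\doteq \exp\{n\hat H(\bm{x}(w)|\bm{y})\} \approx \exp\{n H_{P_X\times W}(X|Y)\} > m_n = \exp\{nr_n\}$ for large $n$, so abandonment occurs almost surely — independently of the codebook, hence also in ensemble average.

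I expect the \textbf{main obstacle} to be the careful handling of the abandonment analysis in the achievability direction: one must show not merely that the \emph{true} codeword's conditional type is typical, but that the \emph{cumulative} count $G(\bm{x}(w)|\bm{y})$ — which sums the sizes of \emph{all} conditional shells with entropy at most $\hat H(\bm{x}(w)|\bm{y})$, a quantity that is itself a random variable depending on $\bm{y}$ — does not exceed $m_n$. This requires a uniform (over the polynomially many conditional types) bound showing the sum is dominated to first order by its largest term $\doteq\exp\{n\hat H(\bm{x}(w)|\bm{y})\}$, combined with a concentration statement that $\hat H(\bm{x}(w)|\bm{y}) \le H_{P_X\times W}(X|Y) + o(1)$ with high probability over the channel; the subtlety is that this is a statement about the conditional type of a \emph{single} codeword given $\bm{y}$, which is clean, but it must be decoupled cleanly from the confusion-error event so the union bound goes through. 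The tie-breaking convention (treating ties as errors, exhausting equivalent shells) from Section~\ref{sec:error_prb_rates} is what makes this decoupling clean, so I would lean on it explicitly.
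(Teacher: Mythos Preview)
Your proposal is correct and follows essentially the same route as the paper: decompose the ensemble error into the incorrect-decoding event $\mathcal{E}_1$ and the abandonment event $\mathcal{A}_1$, bound their union above (achievability) and below (converse) by the sum and the maximum respectively, handle $\mathcal{E}_1$ via the standard MMI/ME analysis, and handle $\mathcal{A}_1$ by reducing $G(\bm{x}|\bm{y})>\e^{nr_n}$ to a statement about $\hat H(\bm{x}|\bm{y})$ via type counting. The ``main obstacle'' you flag---controlling the cumulative shell count $G(\bm{x}|\bm{y})$ to first exponential order by $\exp\{n\hat H(\bm{x}|\bm{y})\}$---is exactly what the paper isolates as the sandwich bound~\eqref{eq:G_bounds}, after which both the achievability and converse abandonment probabilities become concentration statements for $\hat H(\bm{X}|\bm{Y})$ (equivalently $\hat I(\bm{X}\wedge\bm{Y})$, since the composition is fixed) around $H_{P_X\times W}(X|Y)$.
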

Theorem \ref{thm:first-order} implies that the optimal rate pair is  $(R,r) =( I(P_X,W), H(P_X)-I(P_X,W))$, where $P_X \in \Pi(W)$.
%%%%%%
%This indicates 
Thus, when operating at these optimal rates, guessing-based decoding is more efficient than testing-based decoding whenever $I(P_X,W) > \frac{1}{2}H(P_X)$; this has also been observed in \cite{Joudeh2024}, albeit in the restricted context of   modulo-sum channels.

%%%%%%
% An intuitive way to interpret the lower bound on the minimum abandonment rate is as follows.
% %%%%%%
% Under the constant composition random codebook ensemble, the probability that any given input sequence in the type class is a codeword is roughly $\e^{n I(P_X,W)} / \e^{nH(P_X)}$, almost independently of other sequences.
% %%%%%
% Therefore, it takes on average $\e^{n  [  H(P_X)-I(P_X,W)  ]}=\e^{nH_{P_X\times W}(X|Y)}$ sequence guesses  to encounter the first codeword. \textbf{\color{green}(considering removing this)}
%%%%%
\subsection{Proof of Theorem \ref{thm:first-order} }
Due to symmetry in the random codebook ensemble, $\varepsilon_n$ remains unchanged if we condition on the event that message~$1$ is sent.  Therefore, we can write $\varepsilon_n = \Prb \left[ \mathcal{E}_1 \cup \mathcal{A}_1\right]$, where the incorrect decoding event $\mathcal{E}_1$ and the abandonment event $\mathcal{A}_1$ for message  $1$ are respectively defined as 
\begin{align}
\mathcal{E}_1 & := \bigcup_{w \neq 1}\left\{ G(\bm{X}(w) | \bm{Y}) \leq G(\bm{X}(1)|\bm{Y}) \right\}\quad\mbox{and} \label{eqn:def_E1} \\*
\mathcal{A}_1  & := \left\{ G(\bm{X}(1)|\bm{Y}) > \e^{nr_n} \right\}. \label{eqn:defA1}
\end{align}
%%%%%
Therefore, $\varepsilon_n$ is bounded from above and below as follows
\begin{equation}
\label{eq:error_union_max_bounds}
\max \left\{\Prb \left[ \mathcal{E}_1 \right] , \Prb \left[ \mathcal{A}_1 \right] \right\}   \leq \varepsilon_n \leq \Prb \left[ \mathcal{E}_1 \right] +    \Prb \left[ \mathcal{A}_1 \right].
\end{equation}
%%%%%%
Before we proceed, it is useful to note from~\eqref{eq:modified_rank} that the modified rank function $G(\bm{x}|\bm{y})$ satisfies
\begin{equation}
\label{eq:G_bounds}
 \e^{n \hat{H}(\bm{x}|\bm{y})} \leq   G(\bm{x}|\bm{y}) \leq (n+1)^{|\mathcal{X}| |\mathcal{Y}|} \e^{n \hat{H}(\bm{x}|\bm{y})}.
\end{equation}
%%%%%%%%

Now for the achievability part, consider a sequence of $(n,R,r,P_X^{(n)})$-codes for which $R = I(P_X,W) - \delta$ and $r = H_{P_X \times W}(X|Y) + \delta$ for some $\delta > 0$.
%%%%%
Using \eqref{eq:G_bounds}, the probability of incorrect decoding is bounded above as 
\begin{equation}
  \Prb \left[ \mathcal{E}_1 \right] \leq \Prb \Bigg[ \bigcup_{w \ne 1}\big\{ \hat{H}(\bm{X}(w) | \bm{Y}) \leq \hat{H}(\bm{X}(1)|\bm{Y}) + \delta_n \big\}\Bigg] 
\end{equation}
where $\delta_n = \frac{|\mathcal{X}||\mathcal{Y}|}{n} \log (n+1)$.
%%%%%%
%We can also find a corresponding lower bound for $\Prb \left[ \mathcal{E}_1 \right]$ in a similar manner.
%%%%%%
\red{Hence $\Prb \left[ \mathcal{E}_1 \right]$ is 
bounded (up to the presence of the remainder term $\delta_n$) by the ensemble average error probability} achieved by the  ME (therefore the MMI) decoder. From standard analysis and results, we know that $\Prb \left[ \mathcal{E}_1 \right]$ tends to zero as $n$ grows large. 
%%%%%
On the other hand, using the upper bound in~\eqref{eq:G_bounds}, the probability of abandonment is bounded as
\begin{align}
\Prb \left[ \mathcal{A}_1 \right] 
& \leq \Prb \big[ \hat{H}(\bm{X}|\bm{Y}) \geq r  - \delta_n \big]  \\
& = \Prb \big[ \hat{I}(\bm{X} \wedge \bm{Y}) \leq I(P_X,W) - \delta + \delta_n \big] \label{eqn:PA1_UB}
\end{align}
%%%%%%%%
Using properties of conditional types \cite{Csi97}, we can show that the probability in \eqref{eqn:PA1_UB} tends to $0$ as $n$ grows large (see Section~\ref{subsec:achievability_ee}).
%%%%%%%
As a result, $\varepsilon_n \leq \Prb \left[ \mathcal{E}_1 \right] +    \Prb \left[ \mathcal{A}_1 \right] \to 0$ as $n \to \infty$.

Next, we show an ensemble strong converse. To this end, consider a sequence of  $(n,R_n,r_n,P_X^{(n)})$-codes with rates that satisfy at least one of the following two conditions:
\begin{align}
    \liminf_{n \to \infty} R_n  &\geq R  = I(P_X,W) + \delta \ \ \text{or}  \label{eqn:R_converse} \\
    \limsup_{n \to \infty} r_n  &\leq r  = H_{P_X \times W} (X|Y) - \delta \label{eqn:r_converse}
\end{align}
for some $\delta > 0$.
%%%%%
If \eqref{eqn:R_converse} holds, then by the strong converse to the channel coding theorem, it holds that $\Prb \left[ \mathcal{E}_1 \right] \to 1$ as $n \to \infty$.
%%%%%
Otherwise, if \eqref{eqn:r_converse} holds, for all $n$ sufficiently large, 
\begin{equation}
r_n \le H_{P_X\times W}(X|Y)-\delta/2  \label{eqn:r_converse2}    
\end{equation}
and we consider the following lower bound on the probability of abandonment
%%%%%%%%
\begin{align}
\Prb \left[ \mathcal{A}_1 \right] 
& \geq \Prb \big[ \hat{H}(\bm{X}|\bm{Y}) \geq r_n \big] \\
%& \geq \Prb\left[ \hat{I}(\bm{X} \wedge \bm{Y}) \leq H(P_X) - r_n \right] \\
%%%%%
%& \geq \Prb\left[ \hat{I}(\bm{X} \wedge \bm{Y}) \leq H(P_X) - \sup_{m \geq n}\{r_m\} \right] \\
%%%%%
%& \geq \Prb\left[ \hat{I}(\bm{X} \wedge \bm{Y}) \leq H(P_X) - (r + \epsilon) \right]  \ \ (\text{for} \ n \geq n_{\epsilon}) \\
%%%%%
& \geq \Prb\big[ \hat{I}(\bm{X} \wedge \bm{Y}) \leq I(P_X,W) + \delta/2 \big] \label{eqn:PA1_LB}
\end{align}
%%%%%%%%
where \eqref{eqn:PA1_LB} is due to \eqref{eqn:r_converse2}. %and holds for every $\epsilon > 0$ provided that $n$ is large enough.
%%%%%%
%By choosing $\delta > \epsilon$, 
The lower bound in \eqref{eqn:PA1_LB} converges to one (see Section \ref{subsec:converse_sce}), and hence %We can then show that 
$\Prb \left[ \mathcal{A}_1 \right] \to 1 $  as $n \to \infty$. This concludes the proof.
%%%%%%%%
%%%%%%%%
\section{Ensemble-Tight Second-Order Rates} \label{sec:second_order}
In this section, we  derive ensemble-tight second-order asymptotics for guessing-based decoding with abandonment. 
%%%%%%
%We derive precise backoffs from the first-order fundamental limits derived in Theorem~\ref{thm:first-order} when the error probability is constrained to be non-vanishing. 
Since the error probability is a function of both the code  and abandonment rates, we consider the scenario in which the two rates involved deviate from their first-order fundamental limits derived in Theorem \ref{thm:first-order} by  terms that vanish as $\Theta\big( \frac{1}{\sqrt{n}} \big)$. This motivates the following definition.

\begin{definition} \label{def:2nd_order}
\red{The pair of second-order rates $(s,t) \in\mathbb{R}^2$ is  {\em $(\varepsilon, P_X)$-achievable} if the sequence of $(n,R_n,r_n, P_{X}^{(n)})$-codes satisfies  (i) $P_X^{(n)}$ converges to 
\begin{align}
    P_X  \in \left\{ \begin{array}{cc}
         \argmin_{P\in \Pi(W)}V(P,W)  & \mbox{if } s\wedge t\ge 0  \\
         \argmax_{P\in \Pi(W)}V(P,W) & \mbox{if } s\wedge t < 0 
    \end{array}   \right. \label{eqn:caid}
\end{align}
and (ii) the code and abandonment rates and random-coding error probabilities  $\{\varepsilon_n\}_{n\in\mathbb{N}}$    satisfy
\begin{align}
    \liminf_{n\to\infty}\sqrt{n} \big(R_n-C( W)\big) &\ge -s ,\label{eqn:deviate_rate} \\
    \limsup_{n\to\infty}\sqrt{n} \big(r_n-H_{ P_X \times W} (X|Y) \big) &\le t,\quad\mbox{and}\label{eqn:deviate_abn}\\*
    \limsup_{n\to\infty}\varepsilon_n&\le\varepsilon. \label{eqn:asymp_er}
\end{align}
The closure of the set of all $(\varepsilon,P_X)$-achievable second-order rates for DMC $W$ is denoted as $\mathcal{L}_\varepsilon^*(P_X,W)$.} %Furthermore, define $\mathcal{L}_\varepsilon^*(W) = \cup_{P_X\in\Pi(W)} \mathcal{L}_\varepsilon^*(P_X,W)$ as the (maximal) set of $\varepsilon$-second-order achievable rates.}
\end{definition}
%%%%%
In view of Theorem~\ref{thm:first-order}, we consider (small) deviations of $R_n$ and~$r_n$ from $C(W)$ and $H_{P_X\times W}(X|Y)$, respectively. Theorem~\ref{thm:first-order} implies that  $C(W)$ and $ H_{P_X\times W}(X|Y)$ are first-order rates that our second-order analysis should be centered at. The difference in the signs in~\eqref{eqn:deviate_rate} and~\eqref{eqn:deviate_abn} is because these are the directions in the ``usual case'' where the   error probability $\varepsilon<1/2$, and we expect that $s$ and $t$ are positive. For completeness, we also treat the case in which $\varepsilon\ge 1/2$.

\red{Unlike most works on second-order rates~\cite{Polyanskiy2010, Tan2014}, fully characterizing $\mathcal{L}_\varepsilon^*(P_X,W)$ requires both an achievability proof and an {\em ensemble} converse, rather than the standard information-theoretic converse such as those involving the use of the meta-converse~\cite{Polyanskiy2010}. This necessity arises because the coding scheme—comprising random codes with constant composition codebooks and guessing-based decoding with abandonment—is {\em predetermined} and {\em  fixed}. Our goal is to precisely determine the performance of this specific scheme without looseness in the analysis. To do so, we  focus on finding the maximal set of second-order rates with a fixed non-vanishing error probability or, equivalently, the smallest asymptotic error probability at fixed second-order rates. This approach parallels  analyses of nearest-neighbor decoding for additive channels~\cite{scarlett2017} and Gaussian codebooks for lossy compression~\cite{zhou2019} where the ensembles of codes and decoding schemes are fixed.}

\begin{figure}[t]
  \centering
   \begin{overpic}[width=.96\columnwidth]{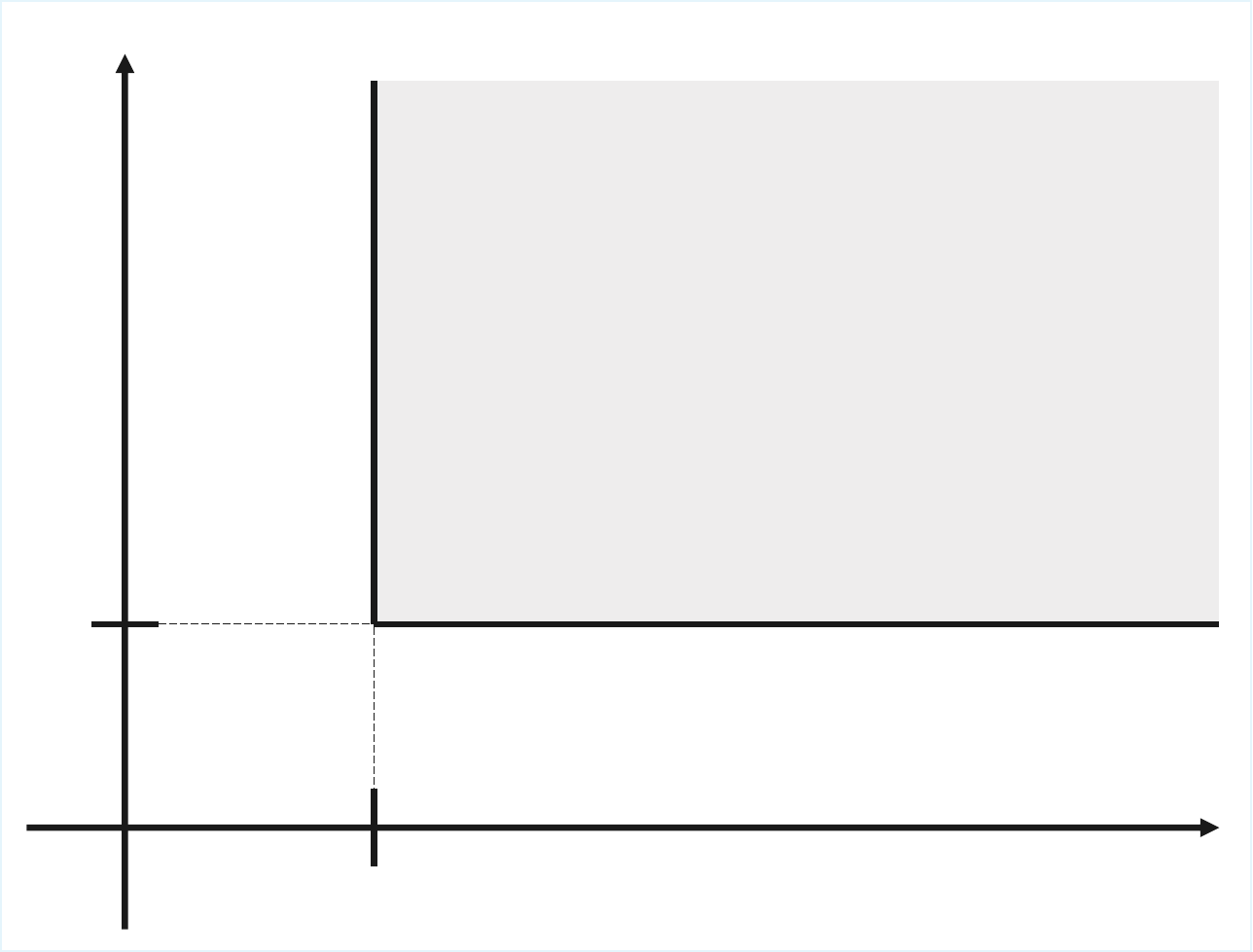}
   \put(96,6){$s$}
   \put(4,70){$t$}
   \put(6,5){$0$}
   \put(23,1){$\sqrt{V_\varepsilon(W)}\mathrm{Q}^{-1}(\varepsilon)$}
   \put(0,19){\rotatebox{90}{$\sqrt{V_\varepsilon(W)}\mathrm{Q}^{-1}(\varepsilon)$}}
   \put(56,46){{\large $\mathcal{L}_\varepsilon^*(P_X,W)$}}
   \end{overpic}
    \caption{Illustration of a second-order region for $\varepsilon \in  (0,1/2)$}
    \label{fig:2nd_order}
\end{figure}

The set of $\varepsilon$-achievable second-order rates, $\mathcal{L}_\varepsilon^*(P_X,W)$, is fully characterized in the following theorem.
\begin{theorem} \label{thm:2nd_order}
% Let $W$ be a DMC and $\varepsilon\in [0,1)$. % In accordance with Definition~\ref{def:2nd_order},  
% If $\varepsilon  \in [0,1/2)$ (resp.\ $\varepsilon\in [1/2,1)$), let the  $\e^{nR_n}$ codewords be drawn independently and uniformly from the type class $\mathcal{T}_{P_{X}^{(n)}}$, where $P_{X}^{(n)}$ converges to a capacity-achieving input distribution $P_X$ that attains $V_{\min}(W)$ (resp.\ $V_{\max}(W)$). Then 
Assume that $W$ has positive $\varepsilon$-dispersion. Then, for all $\varepsilon\in[0,1)$,
\begin{align}
    \mathcal{L}_\varepsilon^*(P_X,W) = \left\{ (s,t)\in \mathbb{R}^2 : s\wedge t\ge\sqrt{V_\varepsilon(W)}\mathrm{Q}^{-1}(\varepsilon)\right\}.
\end{align}
\end{theorem}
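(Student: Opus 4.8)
The plan is to decompose the ensemble error probability, as in the proof of Theorem~\ref{thm:first-order}, via the sandwich bound $\max\{\Prb[\mathcal{E}_1],\Prb[\mathcal{A}_1]\} \le \varepsilon_n \le \Prb[\mathcal{E}_1]+\Prb[\mathcal{A}_1]$ from \eqref{eq:error_union_max_bounds}, and to show that each of the two sub-events is governed by its own second-order term, with the worst of the two dominating. Because $o(1/\sqrt n)$ perturbations do not affect second-order rates, the factors $(n+1)^{|\mathcal{X}||\mathcal{Y}|}$ in the bounds \eqref{eq:G_bounds} are harmless (they contribute the $\delta_n=\frac{|\mathcal{X}||\mathcal{Y}|}{n}\log(n+1)$ term, which is $o(1/\sqrt n)$), so one may argue directly with $G(\bm{x}|\bm{y}) \doteq \e^{n\hat H(\bm{x}|\bm{y})}$.

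For the \textbf{achievability direction}: fix $(s,t)$ with $s\wedge t \ge \sqrt{V_\varepsilon(W)}\,\mathrm{Q}^{-1}(\varepsilon)$ and choose a sequence of constant-composition codes with $R_n = C(W) - s/\sqrt n + o(1/\sqrt n)$ and $r_n = H_{P_X\times W}(X|Y) + t/\sqrt n - o(1/\sqrt n)$, with $P_X^{(n)}\to P_X\in\Pi(W)$ chosen (as in the classical second-order analysis~\cite{wang2011,TomTan13}) so that $V(P_X^{(n)},W)\to V_\varepsilon(W)$. The event $\mathcal{E}_1$, after absorbing $\delta_n$, is dominated by the ME/MMI decoding error event, whose ensemble probability has a Gaussian-type limit: $\Prb[\mathcal{E}_1] \lesssim \mathrm{Q}\!\big(\frac{C(W)-R_n}{\sqrt{V(P_X^{(n)},W)/n}}\big) + o(1) \to \mathrm{Q}(s/\sqrt{V_\varepsilon(W)}) \le \varepsilon$ — this is exactly the second-order achievability of MMI decoding. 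For the abandonment event, \eqref{eqn:PA1_UB} gives $\Prb[\mathcal{A}_1] \le \Prb[\hat I(\bm X\wedge\bm Y) \le I(P_X,W) - t/\sqrt n + o(1/\sqrt n)]$; since $(\bm X,\bm Y)$ is drawn from the constant-composition ensemble through $W$, the empirical mutual information $\hat I(\bm X\wedge\bm Y)$ concentrates around $I(P_X^{(n)},W)$ with the same conditional information variance $V(P_X^{(n)},W)$ governing $\sqrt n$-fluctuations (this is a Berry--Esseen statement on $\sum_i \log\frac{W(Y_i|X_i)}{P_X^{(n)}W(Y_i)}$), so $\Prb[\mathcal{A}_1] \to \mathrm{Q}(t/\sqrt{V_\varepsilon(W)}) \le \varepsilon$. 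Combining, $\limsup_n \varepsilon_n \le 2\varepsilon$; the extra factor of $2$ is removed by the standard expurgation/union-refinement argument (or, since we only need $\le\varepsilon$ in the closure, by a diagonal argument over $(s,t)$ slightly in the interior), giving $(s,t)\in\mathcal{L}_\varepsilon^*(W)$.

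For the \textbf{ensemble converse direction}: suppose $(s,t)$ is $\varepsilon$-achievable. Using the lower bounds in \eqref{eq:error_union_max_bounds}, either $\Prb[\mathcal{E}_1]\le\varepsilon+o(1)$ or $\Prb[\mathcal{A}_1]\le\varepsilon+o(1)$ fails to be useful — rather, we have \emph{both} $\Prb[\mathcal{E}_1]\le\varepsilon_n$ and $\Prb[\mathcal{A}_1]\le\varepsilon_n$, so $\limsup_n\Prb[\mathcal{E}_1]\le\varepsilon$ and $\limsup_n\Prb[\mathcal{A}_1]\le\varepsilon$ simultaneously. The first, via the ensemble (not general) converse for MMI/ME decoding of constant-composition codes — which is tight at second order, cf.~\cite{wang2011} — forces $\liminf_n\sqrt n(C(W)-R_n) \le s$ to actually satisfy $s \ge \sqrt{V_\varepsilon(W)}\mathrm{Q}^{-1}(\varepsilon)$; the key point is that the conditional-information-variance lower bound must be taken over $P\in\Pi(W)$ in the unfavorable direction, yielding $V_{\min}$ or $V_{\max}$ according to $\varepsilon\lessgtr 1/2$. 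The second, via the lower bound $\Prb[\mathcal{A}_1] \ge \Prb[\hat I(\bm X\wedge\bm Y)\le I(P_X,W) - t/\sqrt n - o(1/\sqrt n)]$ and the matching Berry--Esseen \emph{lower} bound on the same sum, forces $t \ge \sqrt{V_\varepsilon(W)}\mathrm{Q}^{-1}(\varepsilon)$. Since both must hold, $s\wedge t \ge \sqrt{V_\varepsilon(W)}\mathrm{Q}^{-1}(\varepsilon)$, as claimed.

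The \textbf{main obstacle} is the abandonment analysis: one needs a two-sided Berry--Esseen estimate for $\Prb[\hat I(\bm X\wedge\bm Y)\lessgtr I(P_X^{(n)},W) + \tau/\sqrt n]$ that is uniform as the type $P_X^{(n)}$ varies within $O(1/\sqrt n)$ of $P_X$ and, crucially, handles the case $P_X\in\Pi(W)$ where the information density has a \emph{degenerate} variance along certain coordinates — this is precisely why $V_{\min}$ versus $V_{\max}$ enters, and why the positive-$\varepsilon$-dispersion hypothesis is needed. Controlling the interplay between the choice of $P_X^{(n)}$ optimizing the abandonment term and optimizing the coding term (both want to push $V(P_X^{(n)},W)$ in the \emph{same} direction, since $\mathrm{Q}^{-1}(\varepsilon)$ has a fixed sign), and confirming that no tension arises, is the delicate bookkeeping step; fortunately both terms are minimized/maximized by the same extremal CAID, so the two requirements are compatible and the characterization is exactly $s\wedge t\ge\sqrt{V_\varepsilon(W)}\mathrm{Q}^{-1}(\varepsilon)$.
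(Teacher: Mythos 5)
Your converse direction is sound: lower-bounding $\varepsilon_n$ by $\max\{\Prb[\mathcal{E}_1],\Prb[\mathcal{A}_1]\}$ and applying a two-sided Berry--Esseen estimate to each term separately does force both $s\ge\sqrt{V_\varepsilon(W)}\mathrm{Q}^{-1}(\varepsilon)$ and $t\ge\sqrt{V_\varepsilon(W)}\mathrm{Q}^{-1}(\varepsilon)$, which is the claim. But the \textbf{achievability direction has a genuine gap}. The union bound $\varepsilon_n\le\Prb[\mathcal{E}_1]+\Prb[\mathcal{A}_1]$ gives, in the limit, $\mathrm{Q}(s/\sqrt{V_\varepsilon})+\mathrm{Q}(t/\sqrt{V_\varepsilon})$, not $\mathrm{Q}\big((s\wedge t)/\sqrt{V_\varepsilon}\big)$; at the boundary $s=t=\sqrt{V_\varepsilon}\mathrm{Q}^{-1}(\varepsilon)$ this is $2\varepsilon$, not $\varepsilon$. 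Neither device you invoke closes this. Expurgation costs $O(1/n)=o(1/\sqrt n)$ in \emph{rate}, which you can afford, but it does not turn $2\varepsilon$ into $\varepsilon$: removing codewords cannot shrink $\Prb[\mathcal{A}_1]$, since that event depends only on $(\bm{X}(1),\bm{Y})$ and the abandonment threshold, not on the rest of the codebook. And a diagonal argument over interior $(s,t)$ cannot help either, because the closure of $\{(s,t):\mathrm{Q}(s/\sqrt{V_\varepsilon})+\mathrm{Q}(t/\sqrt{V_\varepsilon})\le\varepsilon\}$ is a \emph{strict} subset of $\{s\wedge t\ge\sqrt{V_\varepsilon}\mathrm{Q}^{-1}(\varepsilon)\}$ — e.g.\ on the diagonal you would need $s\ge\sqrt{V_\varepsilon}\mathrm{Q}^{-1}(\varepsilon/2)>\sqrt{V_\varepsilon}\mathrm{Q}^{-1}(\varepsilon)$. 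So your plan proves only a proper subregion of $\mathcal{L}_\varepsilon^*(W)$.

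The missing idea is to avoid the union bound entirely by writing $\varepsilon_n$ \emph{exactly} via the law of total probability over whether $(\bm{x},\bm{y})\in\mathcal{S}_{r_n}:=\{G(\bm{x}|\bm{y})\le\e^{nr_n}\}$, as in \eqref{eqn:decomp_Tr1}--\eqref{eqn:decomp_Tr2}. Off $\mathcal{S}_{r_n}$, the conditional error probability is identically $1$, giving exactly $\Prb[\mathcal{A}_1]$; on $\mathcal{S}_{r_n}$, the conditional error probability is $1-(1-\Psi(\bm{x},\bm{y}))^{M_n-1}$, and one shows (via the set $\mathcal{F}_n$ in \eqref{eqn:defU_n}) that this is negligible unless $\Psi(\bm{x},\bm{y})\gtrsim 1/(nM_n)$, i.e.\ unless $\hat I(\bm{x}\wedge\bm{y})\lesssim R_n$. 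The crucial structural fact you are discarding is that these two surviving events live on \emph{disjoint intervals} of $\hat I(\bm{X}\wedge\bm{Y})-I(P_X,W)$: the in-range error demands this be in $[-t/\sqrt n,-s/\sqrt n)$ while abandonment demands it be below $-t/\sqrt n$. The two Gaussian probabilities therefore telescope — $\big[\mathrm{Q}(s/\sqrt{V_\varepsilon})-\mathrm{Q}(t/\sqrt{V_\varepsilon})\big]^++\mathrm{Q}(t/\sqrt{V_\varepsilon})=\mathrm{Q}\big((s\wedge t)/\sqrt{V_\varepsilon}\big)$ — which is exactly how the $s\wedge t$ appears. The paper flags this explicitly in the remark at the end of Section~\ref{subsec:proof_second_order}: the union bound is fine for the first-order and exponent results (where a multiplicative slack is absorbed), but not here.
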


Theorem~\ref{thm:2nd_order}, which is illustrated schematically in Fig.~\ref{fig:2nd_order}, may be stated alternatively as follows. If the code and abandonment rates are parametrized by $s$ and $t$ as
\begin{align}
    R_n \approx  C(W) - \frac{s }{\sqrt{n}}\;\;\mbox{and}\;\; r_n  \approx  H_{ P_X \times W} (X|Y)+ \frac{t}{\sqrt{n}},
\end{align}
%(where $\approx$ means up to terms of the order $o(n^{-1/2})$) 
then the ensemble average error probability satisfies
\begin{align}
    \lim_{n\to\infty}
    %\bar{p}_{\mathrm{e}}(R_n,r_n)
    \varepsilon_n = \left\{  \begin{array}{cc}
        \mathrm{Q}\Big(\frac{s\wedge t}{\sqrt{V_{\min}(W) }} \Big)& \mbox{if }s\wedge t\ge0 \vspace{.3em}\\
         \mathrm{Q}\Big(\frac{s\wedge t}{\sqrt{V_{\max}(W) }} \Big)   & \mbox{if }s\wedge t < 0
    \end{array}\right. .
\end{align}

We provide some intuition for Theorem~\ref{thm:2nd_order} by examining  two  extreme cases. 

If the guessing-based decoder does not incorporate an abandonment step,  $t$ takes the value $+\infty$. Thus, the optimal set of second-order rates is precisely the set of all $(s,t)$ pairs such that $s\ge \sqrt{V_\varepsilon(W)}\mathrm{Q}^{-1}(\varepsilon)$. This reduces to the characterization of the second-order rate for channel coding using the random coding union (RCU) bound \cite{Polyanskiy2010}, which is ensemble tight, and indeed, (second-order) tight over all encoding-decoding  strategies~\cite{Tan2014, Hayashi09, Polyanskiy2010}. 

If, on the other hand, the rate of the code is strictly smaller than $C(W)$, we set $s$ as $+\infty$. In this case, by the definition of the event $\mathcal{A}_1$ in~\eqref{eqn:defA1} and the bounds on the guessing function $G(\bm{x}|\bm{y})$ in terms of $\hat{H}(\bm{x}|\bm{y})$ in~\eqref{eq:G_bounds}, the problem, at a mathematical level, reduces to universal almost-lossless source coding with full side information at the encoder and decoder---{\em universal conditional source coding} in short.
%-- under minimum empirical conditional entropy decoding.
To elaborate, given the channel output $\bm{Y}$, the guessing-based decoder  with abandonment rank-orders  input sequences $\bm{X} \in \mathcal{T}_{P_X^{(n)}}$ up to that of rank $\e^{nr_n}$ according to $G( \cdot | \bm{Y})$ or, almost equivalently, $\hat{H}(\cdot|\bm{Y})$. Here, in the parlance of universal conditional source coding,   the ``source''   $\bm{X} \sim \mathrm{Unif}\big(\mathcal{T}_{P_X^{(n)}} \big)$  is compressed to rate~$r_n$, and the ``side information'' is $\bm{Y}\sim W^n(\cdot|\bm{X})$.  
%%%%%
The error event of this universal conditional source coding setting is exactly $\mathcal{A}_1$.
%%%%%
In this case when $s$ is set as $+\infty$, we also recover the optimal second-order rate of conditional source coding, a result analogous to~\cite{TK14, Nom14}.\footnote{It should be noted that the usual source coding setting~\cite[Ch.~3]{Cov06} involves compressing all source type classes, in contrast to the above setting, which constitutes the sub-problem if compressing a single type class.} 

In summary, the {\em minimum} of the two ``backoff'' terms~$s$ and~$t$ governs the second-order behavior of guessing-based decoding with abandonment and the region  $\mathcal{L}_\varepsilon^*(P_X,W)$ is {\em  rectangular}; this is in contrast to other (known) second-order regions for network information theory which are typically curved~\cite{Tan2014,TK14, scarlettTan2015}. The key reason, as illustrated in the subsequent proof, is that after simplification, the error probability in our setting hinges on the convergence of a sequence of {\em scalar} random variables (see for example, \eqref{eqn:limitA1} and \eqref{eqn:apply_CLT}). In contrast, other network information theory scenarios typically involve analyzing the convergence of sequences of random {\em vectors}.

%This is because, as can be seen in the proof to follow, the error probability in our setting, after simplification, involves assessing the convergence of a  sequence of {\em scalar} random variables (see for example, \eqref{eqn:limitA1} and \eqref{eqn:apply_CLT}); other  network information theory settings, in contrast, involve analyzing the convergence of a sequence of  random {\em vectors}.

Finally, we note that the original GRAND paper by Duffy {\em et al.}~\cite{Duffy2019} stated that ``GRANDAB (the GRAND procedure with an abandonment step) results in an error if either the ML decoding is erroneous, ... or if the algorithm abandons guessing before an element of the codebook is identified.'' This assertion is conceptually similar to Theorem \ref{thm:2nd_order}, which can be interpreted as a second-order analogue of this statement. However, it is important to observe that \cite[Prop.~3]{Duffy2019} focuses on issues related to the rank function $G$, rather than the error probability directly, and addresses the error exponent regime. In contrast, Theorem \ref{thm:2nd_order} focuses on the ensemble average error probability in the second-order regime.

%{\color{red}Expand on this to link to source coding}

\subsection{Proof of the Achievability Part of Theorem~\ref{thm:2nd_order} }
\label{subsec:proof_second_order}
%In this section, we provide the proof of the achievability part of Theorem \ref{thm:2nd_order}. This 
We show that under guessing-based decoding with second-order rates in  $\mathrm{int}(\mathcal{C}_\varepsilon^*(W))$, the ensemble error probability is upper bounded by~$\varepsilon$ as $n\to\infty$. % The random coding ensemble depends on the sign of $s \wedge t$. 

For each $n$, the $\e^{nR_n}$ codewords are generated uniformly from a type class $\mathcal{T}_{P_X^{(n)}}$ as described in Section \ref{sec:error_prb_rates}. The sequence $\{P_X^{(n)}\}_{n\in\mathbb{N}}$ is chosen to  converge to a CAID $P_X$ defined in~\eqref{eqn:caid}.
% \begin{align}
%     P_X  \in \left\{ \begin{array}{cc}
%          \argmin_{P\in \Pi(W)}V(P,W)  & \mbox{if } s\wedge t\ge 0  \\
%          \argmax_{P\in \Pi(W)}V(P,W) & \mbox{if } s\wedge t < 0 
%     \end{array}   \right.  . \label{eqn:caid}
% \end{align}
%If $s\wedge t\ge 0$ (resp.\ $s\wedge t<0$),  the $\e^{nR_n}$$ codewords are generated uniformly from  $\mathcal{T}_{P_X^{(n)}}$ such that $P_X^{(n)}$ converges to a capacity-achieving input distribution $P_X \in \argmin_{P\in\Pi(W)}V(P,W)$ (resp.\ $P_X \in \argmax_{P\in\Pi(W)}V(P,W)$) 
From \cite[Lem~2.1.2]{DemboZeitouni}, $\{P_X^{(n)}\}_{n\in\mathbb{N}}$ can be chosen to converge to $P_X$ in the $\ell_1$ metric at a rate  $O(\frac{1}{n} )$.  %On the other hand, if $s\wedge t<0$, the codewords are generated  uniformly from  $\mathcal{T}_{P_X^{(n)}}$ where $P_X^{(n)}$ converges to  $P_X \in \argmax_{P\in\Pi(W)}V(P,W)$. 
The $\varepsilon$-dispersions $V_\varepsilon(W)$ achieved in these two cases are $V_{\min}(W)$ and $V_{\max}(W)$, respectively. %As the $\ell_1$-distance from closest $n$-type $P_X^{(n)}$ to any distribution $P_X$ is no more than $|\mathcal{X}|/n$, 
Using uniform continuity bounds on the mutual information~\cite{zhang2007} and conditional information variance~\cite{TomTan13}, we have 
$ \big|I(P_X^{(n)},W)-I(P_X,W) \big|= O\big( \frac{\log n}{n} \big)$ and $ \big|V(P_X^{(n)},W)-V(P_X,W) \big|= O\big( \frac{1}{n} \big)$. 
This implies that the first- and second-order terms will be unaffected by the approximation of distributions by types. Thus, in the following, for the sake of brevity, we write $P_{\bm{X}}$ as the uniform distribution over the type class $\mathcal{T}_{P_X^{(n)}}$ without referencing the blocklength~$n$, and ignoring approximations of distributions by types. 

%It is convenient for this part as well as the ensemble converse part to define the set  
For each $r\ge0$, define the set
\begin{align}
\mathcal{S}_{r} := \big\{  (\bm{x} , \bm{y} ): G(\bm{x}|\bm{y})\le\e^{nr} \big\}.   \label{eqn:def_Sr}
\end{align}
%Roughly speaking, $\mathcal{S}_{r_n}$ consists of those pairs of input-output pairs of sequences $(\bm{x} , \bm{y} )$ such that given $\bm{y}$, the rank of $\bm{x}$ according to the rank function $G(\cdot |\bm{y})$ is not larger than $\e^{nr_n}$.
By the law of total probability, starting from %\eqref{eqn:defe1}
$\varepsilon_n = \Prb \left[ \mathcal{E}_1 \cup \mathcal{A}_1\right]$ \red{(please refer to  the definitions of $\mathcal{A}_1$ and $\mathcal{E}_1$ in~\eqref{eqn:def_E1} and~\eqref{eqn:defA1} respectively),} we have 
$\varepsilon_n = (\clubsuit)+ (\spadesuit)$
where 
\begin{align}
  %\sum_{ \bm{x}, \bm{y}  }P_{\bm{X}}(\bm{x})W^n(\bm{y} | \bm{x})\Prb \left[ \mathcal{E}_1 \cup \mathcal{A}_1 \mid \bm{X}(1) =\bm{x}, \bm{Y}=\bm{y} \right]\\
  (\clubsuit)&:=  \sum_{ (\bm{x}, \bm{y})  \in \mathcal{S}_{r_n} } P_{\bm{X}}(\bm{x})W^n(\bm{y} | \bm{x})\varphi(\bm{x},\bm{y}) \quad\mbox{and} \label{eqn:decomp_Tr1} \\
  (\spadesuit)&:= \sum_{ (\bm{x}, \bm{y})  \in \mathcal{S}_{r_n}^{\comp} } P_{\bm{X}}(\bm{x})W^n(\bm{y} | \bm{x})\varphi(\bm{x},\bm{y})  , \label{eqn:decomp_Tr2}
\end{align}
%where 
and 
\begin{equation}
\varphi(\bm{x},\bm{y}):= \Prb \left[ \mathcal{E}_1 \cup \mathcal{A}_1 \mid \bm{X}(1) =\bm{x}, \bm{Y}=\bm{y} \right]  . \label{eqn:varphi_def}    
\end{equation}
For all $(\bm{x}, \bm{y})  \in \mathcal{S}_{r_n}^{\comp} $, it is clear from the definition of the event $\mathcal{A}_1$ that $\varphi(\bm{x},\bm{y})=1$. Thus,  $(\spadesuit)$ in~\eqref{eqn:decomp_Tr2} is simply $\Prb[\mathcal{A}_1]$, which in view of \eqref{eqn:PA1_UB} and \eqref{eqn:PA1_LB} and by standard second-order analysis~\cite{wang2011} (the central limit theorem), %by Lemma \ref{lem:boundA1} 
satisfies
\begin{align}
    \lim_{n\to\infty}\Prb[\mathcal{A}_1]=\mathrm{Q}\bigg( \frac{t}{\sqrt{V_\varepsilon(W)}}\bigg). \label{eqn:limitA1}
\end{align}
We note, by the choice of $P_X$ in \eqref{eqn:caid}, that $V_\varepsilon(W)$ is    $V_{\min}(W)$ or $V_{\max}(W)$ if $s\wedge t\ge0$ or $s\wedge t<0$, respectively.

We now focus our attention on $(\clubsuit)$. Note from the definition of  $\mathcal{S}_{r_n}$ that   $\varphi(\bm{x},\bm{y})=\Prb \left[ \mathcal{E}_1  \mid \bm{X}(1) =\bm{x}, \bm{Y}=\bm{y} \right]$. Using the independence of the codewords, we deduce that
\begin{align}
    &\varphi(\bm{x},\bm{y}) \nonumber\\*
    %&= 1-\Prb \left[\bigcap_{w\ne 1} \big\{ G(\bm{X}(w) | \bm{Y}) > G(\bm{X}(1)|\bm{Y}) \big\}\,\Big| \, \bm{X}(1)=\bm{x},\bm{Y}=\bm{y}  \right]\\
    &=1\!-\! \prod_{w\ne 1}\Prb \Big[  G(\bm{X}(w) | \bm{Y}) \!>\! G(\bm{X}(1)|\bm{Y}) \,\big| \, \bm{X}(1)\!=\!\bm{x},\bm{Y}\!=\!\bm{y}  \Big] \nonumber\\
    &=1- \prod_{w\ne 1}\big(1-\Prb \left[  G(\bm{X}(w) | \bm{y}) \le G(\bm{x}|\bm{y})   \right] \big)\nonumber\\
    &=1-\big(1-\Psi(\bm{x}, \bm{y})\big)^{M_n-1}, \label{eqn:PE1}
\end{align}
where 
\begin{equation}
\Psi(\bm{x}, \bm{y}) :=\Prb\left[G(\overline{\bm{X}}|\bm{y})\le G(\bm{x}|\bm{y})\right]    \label{eqn:def_Psi}
\end{equation}
and $\overline{\bm{X}} \sim P_{\bm{X}}$ is independent of $(\bm{X},\bm{Y})$.
  Intuitively, for any given $(\bm{x}, \bm{y})$, the quantity $\Psi(\bm{x}, \bm{y})$ represents the probability that a non-transmitted codeword $\overline{\bm{X}}$ has a smaller rank than that of the true transmitted codeword $\bm{x}$ given the channel output $\bm{y}$. 
%%%%%
Before we proceed, we present the following useful lemma whose proof can be found in Appendix~\ref{app:prf_bound_psi}.
\begin{lemma}[Bounds on $\Psi(\bm{x},\bm{y} )$]\label{lem:bound_psi}
For every $n \in \mathbb{N}$, % the quantity $\Psi( \bm{x},\bm{y})$ can be bounded as follows
\begin{align}
   (n\!+\!1)^{-2|\mathcal{X}||\mathcal{Y}|}\e^{-n\hat{I} (\bm{x}\wedge\bm{y}) } \!\le\!\Psi( \bm{x},\bm{y} )\!\le\!(n\!+\!1)^{3|\mathcal{X}||\mathcal{Y}|}\e^{-n\hat{I} (\bm{x}\wedge\bm{y}) }.
\end{align}
\iffalse{\color{red} Old version:
\begin{align}
   (n\!+\!1)^{-|\mathcal{X}||\mathcal{Y}|}\e^{-n\hat{I} (\bm{x}\wedge\bm{y}) } \!\le\!\Psi( \bm{x},\bm{y} )\!\le\!(n\!+\!1)^{2|\mathcal{X}||\mathcal{Y}|}\e^{-n\hat{I} (\bm{x}\wedge\bm{y}) }.
\end{align}
H.J. We need to check if the polynomial factors in the new version are correct, and then we may need to propagate the change to other parts where the lemma is used}\fi
\end{lemma}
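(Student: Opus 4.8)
\textbf{Proof plan for Lemma~\ref{lem:bound_psi}.}
The plan is to directly estimate $\Psi(\bm{x},\bm{y}) = \Prb[G(\overline{\bm{X}}|\bm{y}) \le G(\bm{x}|\bm{y})]$ by unpacking the definition of the modified rank function in~\eqref{eq:modified_rank} and exploiting the fact that $\overline{\bm{X}}$ is uniform on the type class $\mathcal{T}_{P_X^{(n)}}$. First I would observe that, since $\overline{\bm{X}} \sim \mathrm{Unif}(\mathcal{T}_{P_X^{(n)}})$, the probability in question is simply the number of sequences $\bm{x}'$ in that type class with $G(\bm{x}'|\bm{y}) \le G(\bm{x}|\bm{y})$, divided by $|\mathcal{T}_{P_X^{(n)}}|$. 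By the construction of $G$, the event $\{G(\bm{x}'|\bm{y}) \le G(\bm{x}|\bm{y})\}$ is exactly the event $\{\hat{H}(\bm{x}'|\bm{y}) \le \hat{H}(\bm{x}|\bm{y})\}$, i.e.\ $\bm{x}'$ lies in one of the conditional type classes $\mathcal{T}_{V_{X|Y}}(\bm{y})$ whose conditional entropy does not exceed $\hat{H}(\bm{x}|\bm{y})$. Hence the numerator is $\sum_{V_{X|Y} : H(V_{X|Y}|\hat{P}_{\bm{y}}) \le \hat{H}(\bm{x}|\bm{y})} |\mathcal{T}_{V_{X|Y}}(\bm{y})|$, which is precisely $G(\bm{x}|\bm{y})$ itself. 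So $\Psi(\bm{x},\bm{y}) = G(\bm{x}|\bm{y}) / |\mathcal{T}_{P_X^{(n)}}|$.

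The remaining work is then purely a matter of inserting standard type-counting estimates. For the numerator I would use~\eqref{eq:G_bounds}, which already gives $\e^{n\hat{H}(\bm{x}|\bm{y})} \le G(\bm{x}|\bm{y}) \le (n+1)^{|\mathcal{X}||\mathcal{Y}|}\e^{n\hat{H}(\bm{x}|\bm{y})}$. For the denominator I would use the standard bound $(n+1)^{-|\mathcal{X}|}\e^{nH(\hat{P}_{\bm{x}})} \le |\mathcal{T}_{P_X^{(n)}}| \le \e^{nH(\hat{P}_{\bm{x}})}$, where $\hat{P}_{\bm{x}} = P_X^{(n)}$ is the common composition. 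Combining these and using $H(\hat{P}_{\bm{x}}) - \hat{H}(\bm{x}|\bm{y}) = \hat{I}(\bm{x}\wedge\bm{y})$ (the chain rule for empirical quantities, valid since $(\bm{x},\bm{y})$ has a well-defined joint type and $\hat{H}(\bm{x}|\bm{y}) = H(\hat{P}_{\bm{x}}) - \hat{H}(\bm{y}) + \hat{H}(\bm{y}|\bm{x})$ rearranges to $H(\hat{P}_{\bm{x}}) - \hat{I}(\bm{x}\wedge\bm{y}) = \hat{H}(\bm{x}|\bm{y})$), the polynomial factors aggregate to $(n+1)^{-2|\mathcal{X}||\mathcal{Y}|}$ on the lower side and $(n+1)^{3|\mathcal{X}||\mathcal{Y}|}$ on the upper side once one is generous with the crude bounds $|\mathcal{X}| \le |\mathcal{X}||\mathcal{Y}|$, etc. (The discrepancy between the ``old'' and ``new'' polynomial exponents flagged in the commented-out block is exactly a question of how tightly one tracks these factors; I would simply choose bookkeeping that lands on the stated exponents, which is always possible by being sufficiently loose.)

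I do not anticipate a genuine obstacle here; the only subtlety is the combinatorial identity $\Psi(\bm{x},\bm{y}) = G(\bm{x}|\bm{y})/|\mathcal{T}_{P_X^{(n)}}|$, which relies on the tie-breaking convention baked into~\eqref{eqn:rank_function_condition_1}--\eqref{eqn:rank_function_condition_2} and the coarsening in~\eqref{eq:modified_rank}: because $G$ is constant on each ``equivalent'' conditional type class and counts all sequences in classes of weakly smaller conditional entropy, one must be careful that ties are counted \emph{as being} $\le$, not $<$. Since the lemma only asserts two-sided polynomial bounds and not an exact equality, even a slightly lossy version of this identity (e.g.\ bounding $\Psi$ above and below by ratios that differ from $G(\bm{x}|\bm{y})/|\mathcal{T}_{P_X^{(n)}}|$ by at most a polynomial factor) would suffice, so the argument is robust to such edge cases. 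The rest is routine application of the type bounds from~\cite{Csi97}.
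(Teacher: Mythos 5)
Your proposal is essentially correct and reaches the same destination by the same route: translate the comparison of ranks $G(\overline{\bm{X}}|\bm{y})\le G(\bm{x}|\bm{y})$ into a comparison of empirical conditional entropies, then count conditional type classes and use the standard cardinality bounds. The one genuine improvement in your plan is the observation that, since $G(\cdot|\bm{y})$ as defined in~\eqref{eq:modified_rank} is a non-decreasing step function of $\hat{H}(\cdot|\bm{y})$ that is \emph{constant} on each equal-entropy class and \emph{strictly} increases across classes (each $|\mathcal{T}_{V_{X|Y}}(\bm{y})|\ge 1$), the two events are exactly equivalent: $G(\bm{x}'|\bm{y})\le G(\bm{x}|\bm{y})$ if and only if $\hat{H}(\bm{x}'|\bm{y})\le\hat{H}(\bm{x}|\bm{y})$. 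The paper instead passes through~\eqref{eq:G_bounds} and thereby loses a $\delta_n$ in each direction, which costs an extra $(n+1)^{|\mathcal{X}||\mathcal{Y}|}$ in the polynomial prefactor. Your approach therefore lands on tighter exponents than the lemma states; since the lemma only asserts polynomial slack, both are fine.

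One point to be more careful about: the ``identity'' $\Psi(\bm{x},\bm{y})=G(\bm{x}|\bm{y})/|\mathcal{T}_{P_X^{(n)}}|$ is exact only if the sum defining $G$ in~\eqref{eq:modified_rank} is implicitly restricted to conditional types $V_{X|Y}$ with $V_{X|Y}\hat{P}_{\bm{y}}=P_X^{(n)}$ (consistent with the guessing decoder ranging only over $\mathcal{T}_{P_X^{(n)}}$). If $G$ is read as summing over \emph{all} conditional types, the numerator of $\Psi$ is strictly smaller than $G(\bm{x}|\bm{y})$, and you only get $\Psi\le G/|\mathcal{T}_{P_X^{(n)}}|$; the lower bound on $\Psi$ must then be argued directly by retaining a single shell, $\big|\{\overline{\bm{x}}\in\mathcal{T}_{P_X^{(n)}}:\hat{H}(\overline{\bm{x}}|\bm{y})\le\hat{H}(\bm{x}|\bm{y})\}\big|\ge|\mathcal{T}_{V_{X|Y}}(\bm{y})|\ge(n+1)^{-|\mathcal{X}||\mathcal{Y}|}\e^{n\hat{H}(\bm{x}|\bm{y})}$, where $V_{X|Y}$ is the conditional type of $\bm{x}$ given $\bm{y}$. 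You flag this subtlety yourself and correctly note that the argument is robust to it; this is exactly what the paper's proof does for the lower bound. So no gap, but the write-up should make the lower bound stand on the single-shell estimate rather than on the claimed identity, which is one-sided in that reading.
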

%%%%%
To facilitate the subsequent analysis, define the set 
\begin{align}
    \mathcal{F}_n:=\left\{ (\bm{x},\bm{y}): \Psi(\bm{x}, \bm{y}) \le  \frac{1}{n(M_n-1)}\right\}. \label{eqn:defU_n}
\end{align}
Observe that for all $ (\bm{x},\bm{y})\in\mathcal{F}_n$, 
\begin{align}
     &1-\big(1-\Psi(\bm{x}, \bm{y})\big)^{M_n-1}\nonumber\\
     &\le 1-\bigg(1- \frac{1}{n(M_n-1)} \bigg)^{M_n-1} \nonumber\\
     &\le 1-\bigg[\exp\bigg( -\frac{\frac{1}{n(M_n-1)}}{1-\frac{1}{n(M_n-1)}} \bigg)  \bigg]^{M_n-1} \label{eqn:useineq_e}\\
    &=1- \exp \left( - \frac{1}{n-(M_n-1)^{-1}} \right) =:\eta_n\to 0\label{eqn:def_delta},
\end{align}
where \eqref{eqn:useineq_e} follows from the   inequality $1-t \ge\e^{  -\frac{t}{1-t}}$ for all $t<1$.  By Taylor's theorem,  $\eta_n$ is of the order $O(\frac{1}{n})$. %Equipped with the calculation leading to~\eqref{eqn:def_delta}, 

%Using the upper bound on $\Psi(\bm{x}, \bm{y})$ in Lemma \ref{lem:bounds_psi}, we have 
By splitting $(\clubsuit)$ into two parts ($(\bm{x}, \bm{y})  \in \mathcal{S}_{r_n}\cap\mathcal{F}_n^{\comp}$ and $(\bm{x}, \bm{y})  \in \mathcal{S}_{r_n}\cap\mathcal{F}_n$) and using \eqref{eqn:def_delta}, 
\begin{align}
     (\clubsuit) 
     % &=\sum_{ (\bm{x}, \bm{y})  \in \mathcal{S}_{r_n}\cap\mathcal{F}_n^{\comp} } P_{\bm{X}}(\bm{x})W^n(\bm{y} | \bm{x}) \varphi(\bm{x},\bm{y})\nonumber\\*
     % &\qquad+ \sum_{ (\bm{x}, \bm{y})  \in \mathcal{S}_{r_n}\cap\mathcal{F}_n } P_{\bm{X}}(\bm{x})W^n(\bm{y} | \bm{x}) \varphi(\bm{x},\bm{y})\\
     &\le \sum_{ (\bm{x}, \bm{y})  \in \mathcal{S}_{r_n}\cap\mathcal{F}_n^{\comp} } P_{\bm{X}}(\bm{x})W^n(\bm{y} | \bm{x}) + \eta_n. \label{eqn:two_ubs}
\end{align}
%where in \eqref{eqn:two_ubs}, we upper bounded $\Prb \left[ \mathcal{E}_1  \mid \bm{X}(1) =\bm{x}, \bm{Y}=\bm{y} \right]$ in the first and second terms by $1$ and $\eta_n$ respectively.
It remains to upper bound the first term in \eqref{eqn:two_ubs}, which can be expressed as 
\begin{align}
 &\Prb\left[ (\bm{X}(1),\bm{Y})\in \mathcal{S}_{r_n} \cap \mathcal{F}_n^{\comp}   \right] \nonumber\\
    &=\Prb \left[  \Big\{G(\bm{X}|\bm{Y})\le\e^{nr_n}\Big\}\cap\Big\{\Psi(\bm{X},\bm{Y})>\frac{1}{n(M_n-1)}\Big\}\right]  \nonumber\\
    &\le \Prb \bigg[  \left\{   \e^{n \hat{H}(\bm{X}|\bm{Y})  }\le\e^{nr_n}\right\} \cap \nonumber\\*
    &\qquad\quad \Big\{(n+1)^{ 3 |\mathcal{X}| |\mathcal{Y}| } \e^{-n \hat{I}(\bm{X} \wedge \bm{Y})}>\frac{1}{n(M_n-1)}  \Big\}\bigg] .\label{eqn:twoevents}
\end{align}
In \eqref{eqn:twoevents}, we used the lower bound on $G( \bm{x}|\bm{y})$ in terms of the empirical conditional entropy $\hat{H}( \bm{x}|\bm{y})$  in \eqref{eq:G_bounds} and the  upper bound on $\Psi(\bm{x},\bm{y})$ stated in   Lemma~\ref{lem:bound_psi}.

Using the bounds on $r_n$ and $R_n = \frac{1}{n}\log M_n$ in terms of $t$ and $s$ in~\eqref{eqn:deviate_rate} and~\eqref{eqn:deviate_abn} respectively, we observe that the two events in~\eqref{eqn:twoevents} can be  expressed  as 
\begin{align}
    \hat{I}(\bm{X}\wedge \bm{Y})&\ge I(P_X,W) - \frac{t}{\sqrt{n}} + o\Big( \frac{1}{\sqrt{n}}\Big) \quad \mbox{and} \label{eqn:event1}\\
    \hat{I}(\bm{X}\wedge \bm{Y})&< I(P_X,W) - \frac{s}{\sqrt{n}} + o\Big( \frac{1}{\sqrt{n}}\Big) .\label{eqn:event2}
\end{align}
The intersection of the two events is empty  for large enough~$n$ if $s>t$. On the other hand, if $s\le t$, the intersection is not empty. \red{In fact, for any $\eta>0$ and for large enough~$n$, we have 
\begin{align}
    &\Prb\left[ (\bm{X}(1),\bm{Y})\in \mathcal{S}_{r_n} \cap \mathcal{F}_n^{\comp}   \right]\nonumber\\*
    &\le \Prb\left[ \hat{I}(\bm{X}\wedge \bm{Y})\!-\! I(P_X,W)\in \Big[ -\frac{t+\eta}{\sqrt{n}}, -\frac{s-\eta}{\sqrt{n}} \Big)   \right] \nonumber\\
    &=\mathrm{Q}\bigg( \frac{s-\eta}{\sqrt{V_\varepsilon(W)}}\bigg)- \mathrm{Q}\bigg( \frac{t+\eta}{\sqrt{V_\varepsilon(W)}}\bigg) + o(1),\label{eqn:apply_CLT}
\end{align}}
where~\eqref{eqn:apply_CLT} follows from applying the central limit theorem on the empirical mutual information~\cite{wang2011}. % stated in Lemma~\ref{lem:CLT_EMI}. 
Uniting \eqref{eqn:limitA1}, \eqref{eqn:two_ubs}, and~\eqref{eqn:apply_CLT},  taking $n\to\infty$, \red{then taking $\eta\to0$} yields
\begin{align}
    \limsup_{n\to\infty}\varepsilon_n\le\left\{ \begin{array}{cc}
         \mathrm{Q}\Big( \frac{s}{\sqrt{V_\varepsilon(W)}}\Big)&\mbox{if }s\le t \vspace{.3em}  \\
         \mathrm{Q}\Big( \frac{t}{\sqrt{V_\varepsilon(W)}}\Big)&\mbox{if } s>t
    \end{array}  \right. .
\end{align}
which is the desired upper bound on the asymptotic error probability. 

\begin{remark}
   Unlike the proof of the first-order result (and the proof involving the error exponent in Section \ref{sec:error_exp}, but not the strong converse exponent in Section \ref{sec:strong_conv_exp}), applying the union bound on $\Prb \left[ \mathcal{E}_1 \cup \mathcal{A}_1\right]$ as in~\eqref{eq:error_union_max_bounds}   results  in suboptimality in the   asymptotic ensemble error probability. Thus, in the above proof, a more intricate computation is used, taking into account {\em both} the decoding error event $\mathcal{E}_1$ and the abandonment event $\mathcal{A}_1$ through the introduction of the set $\mathcal{F}_n$ in~\eqref{eqn:defU_n}.
\end{remark}

\subsection{Proof of the Ensemble Converse Part of Theorem~\ref{thm:2nd_order} }
We now prove the ensemble converse part. Since all the derivations in the achievability part are equalities up to and including \eqref{eqn:PE1}, the derivations up that point hold {\em mutatis mutandis}. The point of deviation is in the definition of the set $\mathcal{F}_n$ in~\eqref{eqn:defU_n}. Inspired by \cite{scarlett2017}, we now consider the set 
\begin{align}
    \mathcal{G}_n:=\left\{ (\bm{x},\bm{y}): \Psi(\bm{x}, \bm{y}) \ge  \frac{n}{  M_n-1 }\right\}.
\end{align}
Observe that for all $ (\bm{x},\bm{y})\in\mathcal{G}_n$, 
\begin{align}
     &1-\big(1-\Psi(\bm{x}, \bm{y})\big)^{M_n-1} \ge 1-\Big(1- \frac{ n}{ M_n-1 } \Big)^{M_n-1} \\*
     &\ge 1-\big(\e^{- \frac{n}{M_n-1}}\big)^{M_n-1} \ge 1- \e^{-n},\label{eqn:oneminus_lb}
\end{align}
where the second inequality follows from  $1-t\le \e^{-t}$ for all~$t$.

Using the bound in~\eqref{eqn:oneminus_lb}, the term $(\clubsuit)$ in~\eqref{eqn:decomp_Tr1} can be further lower bounded as follows:
\begin{align}
 (\clubsuit)
     %&\ge\sum_{ (\bm{x}, \bm{y})  \in \mathcal{S}_{r_n}\cap\mathcal{G}_n } P_{\bm{X}}(\bm{x})W^n(\bm{y} | \bm{x})\Prb \left[ \mathcal{E}_1  \mid \bm{X}(1) =\bm{x}, \bm{Y}=\bm{y} \right]  \\*    
     &\ge(1-\e^{-n}) \Prb \left[ (\bm{X}(1),\bm{Y})\in \mathcal{S}_{r_n}\cap\mathcal{G}_n  \right].
\end{align}
%where the final step uses the lower bound on $\Prb \left[ \mathcal{E}_1  \mid \bm{X}(1) =\bm{x}, \bm{Y}=\bm{y} \right]$ in \eqref{eqn:oneminus_lb}.
We observe that the probability above can be expressed as %further lower bounded as follows:
\begin{align}
 &\Prb \left[ (\bm{X}(1),\bm{Y})\in \mathcal{S}_{r_n}\cap\mathcal{G}_n  \right]\nonumber\\*
     &= \Prb \left[  \Big\{G(\bm{X}|\bm{Y})\!\le\!\e^{nr_n}\Big\}\cap \Big\{\Psi(\bm{X},\bm{Y})\!\ge\!\frac{n}{ M_n\!-\! 1 }\Big\}\right] \nonumber\\
    &\ge \Prb \bigg[  \left\{  (n+1)^{ |\mathcal{X}||\mathcal{Y}|} \e^{n \hat{H}(\bm{X}|\bm{Y})  }\le\e^{nr_n}\right\}\cap \nonumber\\*
    &\qquad\Big\{(n+1)^{  -2|\mathcal{X}| |\mathcal{Y}| } \e^{-n \hat{I}(\bm{X} \wedge \bm{Y})} \ge \frac{n}{ M_n-1 }  \Big\}\bigg] ,\label{eqn:twoevents_ens_conv}
\end{align}
where \eqref{eqn:twoevents_ens_conv} uses the upper bound on $G(\bm{x}|\bm{y})$ in~\eqref{eq:G_bounds} and the lower bound of $\Psi(\bm{x},\bm{y})$ in Lemma~\ref{lem:bound_psi}.

Similarly to the achievability part, we observe that the two events in~\eqref{eqn:twoevents_ens_conv} can be expressed as in~\eqref{eqn:event1} and~\eqref{eqn:event2}. An application of the central limit theorem then yields
\begin{align}
    (\clubsuit)\!\ge  \!  (1\!-\!\e^{-n})\bigg[  \mathrm{Q}\bigg( \frac{s}{\sqrt{V_\varepsilon(W)}}\bigg)\!-\! \mathrm{Q}\bigg( \frac{t}{\sqrt{V_\varepsilon(W)}}\bigg) \! + \! o(1)\bigg].\label{eqn:apply_CLT2}
\end{align}
Combining this lower bound with the computation of $(\spadesuit)$ in~\eqref{eqn:limitA1} yields the desired ensemble converse, matching the achievability part. 

\section{Ensemble-Tight Error Exponents} \label{sec:error_exp}
In this section, we consider the regime in which  the code rate $R_n$ and abandonment rate $r_n$ are fixed while the ensemble error probability $\varepsilon_n$ decays exponentially fast with the blocklength. 
%%%%%%
Since these rates do not vary with the blocklength in this setting, we write them as~$R$ and~$r$ respectively. 
%%%%%%
Our objective here is to derive the error exponent as a function of $R$ and $r$.
%%%%%%%
To this end, for fixed $R$, $r$, $P_X$ and $W$, we define\footnote{The exponents clearly depend on the DMC $W$, but this dependence is suppressed for the sake of brevity.} 
\begin{align}
    E_\mathrm{r}(R,P_X) & := \min_{V} D(V\|W | P_X)  + \left| I(P_X,V) - R \right|^{+}, \\
    E_{\mathrm{a}}(r,P_X) & := \min_{V: I(P_X,V) \leq H(P_X) - r} D(V\|W | P_X).
\end{align}
%%%%%%%%
Note that $E_\mathrm{r}(R,P_X)$ is the random coding error exponent for constant composition codes~\cite{Csi97}, where $P_X$ is the limit of the $n$-types $P_X^{(n)}$ (note that in this section, $P_X$  is not necessarily a CAID).  As we shall see in the proof of Theorem~\ref{thm:ee}, $E_{\mathrm{a}}(r,P_X)$ turns out to be the abandonment exponent. Recall Haroutunian's sphere packing exponent~\cite{haroutunian68}
\begin{align}
    E_{\mathrm{sp}}(R,P_X)   := \min_{V: I(P_X,V) \leq R} D(V\|W | P_X);
\end{align}
also see Blahut~\cite{blahut74}. We remark that $E_{\mathrm{a}}(r, P_X)$ is equal to  $E_{\mathrm{sp}}(R,P_X)$ at rate $R = H(P_X) - r$.

In analogy to Definition~\ref{def:2nd_order}, we define the ensemble error exponent as follows.
\begin{definition} \label{def:err_exp}
Fix a limiting distribution $P_X$ (such that $P_X^{(n)}\to P_X$). The exponent $E\in\mathbb{R}_+$ is  {\em achievable} at rate pair $(R,r)$ if the sequence of $(n,R,r, P_{X}^{(n)})$-codes with \red{random-coding} error probabilities $\{\varepsilon_n\}_{n\in\mathbb{N}}$ satisfies  
\begin{align}
    \liminf_{n\to\infty}\frac{1}{n}\log\frac{1}{\varepsilon_n}&\ge E. \label{eqn:ach_error_exp}
\end{align}
The supremum of all achievable exponents for a given rate pair $(R,r)$, limiting input distribution $P_X$ and DMC $W$ is denoted as $E^*(R,r,P_X)$.  Maximizing $E^*(R,r,P_X)$ over $P_X$ yields the {\em maximum ensemble error exponent} $E^*(R,r)$.
\end{definition}
We emphasize that similarly to Section~\ref{sec:second_order},  {\em  ensemble-tight results} are sought here as well. Each codebook $\mathsf{C}_n$  is chosen randomly. The  codewords in  $\mathsf{C}_n$  are independent and drawn uniformly from the type class $\mathcal{T}_{P_X^{(n)}}$. The decoding procedure is also fixed according to the guessing-based decoding with abandonment approach as per the discussion in Section~\ref{sec:error_prb_rates}. 
%%%%%%
\begin{theorem}\label{thm:ee}
    For any $P_X \in \mathcal{P}(\mathcal{X})$ and $(R,r) \in \mathbb{R}^2_{+}$,
\begin{align}
   E^*(R,r,P_X)   = \min  \big\{ E_{\mathrm{r}}(R,P_X),  E_{\mathrm{a}}(r,P_X) \big\}.
    \label{eqn:E_star}
\end{align}
Moreover, $E^*(R,r) = \max_{P_X}E^*(R,r,P_X ) $.
\end{theorem}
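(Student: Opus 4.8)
The plan is to reduce the computation of $E^*(R,r,P_X,W)$ to two independent exponent evaluations by means of the elementary sandwich in~\eqref{eq:error_union_max_bounds}. Writing $\doteq$ for equality to first exponential order, and using $\max\{a,b\}\le a+b\le 2\max\{a,b\}$ for $a,b\ge0$, one has $\varepsilon_n\doteq\max\{\Prb[\mathcal{E}_1],\Prb[\mathcal{A}_1]\}$, so
\[
    \lim_{n\to\infty}\frac1n\log\frac1{\varepsilon_n}=\min\Big\{\lim_{n\to\infty}\tfrac1n\log\tfrac1{\Prb[\mathcal{E}_1]},\;\lim_{n\to\infty}\tfrac1n\log\tfrac1{\Prb[\mathcal{A}_1]}\Big\},
\]
whenever the two individual limits exist (with $\liminf$/$\limsup$ in place of $\lim$ for the direct/converse halves). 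In contrast to the second-order regime, here the union bound costs only a factor $2$, which is exponentially negligible. It therefore suffices to show $\Prb[\mathcal{E}_1]\doteq\e^{-nE_{\mathrm{r}}(R,P_X,W)}$ and $\Prb[\mathcal{A}_1]\doteq\e^{-nE_{\mathrm{a}}(r,P_X,W)}$, in each case proving both a ``$\le$'' (achievability) and a ``$\ge$'' (ensemble converse).

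\emph{Abandonment exponent.} By the bounds on $G(\bm{x}|\bm{y})$ in~\eqref{eq:G_bounds}, the event $\mathcal{A}_1=\{G(\bm{X}(1)|\bm{Y})>\e^{nr}\}$ is sandwiched between $\{\hat{H}(\bm{X}|\bm{Y})>r-\delta_n\}$ and $\{\hat{H}(\bm{X}|\bm{Y})\ge r+\delta_n\}$, where $\delta_n=\frac{|\mathcal{X}||\mathcal{Y}|}{n}\log(n+1)\to0$, exactly as in the passage~\eqref{eqn:PA1_UB}--\eqref{eqn:PA1_LB}. Since the codebook is constant composition, $\hat{H}(\bm{x}|\bm{y})=H(P_X^{(n)})-\hat{I}(\bm{x}\wedge\bm{y})$, so up to an $o(1)$ shift of the threshold the relevant probability is $\Prb[\hat{I}(\bm{X}\wedge\bm{Y})\le H(P_X^{(n)})-r]$ with $\bm{X}\sim\mathrm{Unif}(\mathcal{T}_{P_X^{(n)}})$ and $\bm{Y}\sim W^n(\cdot|\bm{X})$. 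Partitioning $(\bm{X},\bm{Y})$ by the conditional type $V$ of $\bm{Y}$ given $\bm{X}$, using $\Prb[\bm{Y}\in\mathcal{T}_V(\bm{x})\mid\bm{x}]\doteq\e^{-nD(V\|W|P_X^{(n)})}$ and the fact that there are only polynomially many conditional types, the exponent equals $\min_{V:\,I(P_X^{(n)},V)\le H(P_X^{(n)})-r}D(V\|W|P_X^{(n)})$; the same estimate holds from below. By continuity of $E_{\mathrm{sp}}$ in both arguments and $P_X^{(n)}\to P_X$, this converges to $E_{\mathrm{sp}}(H(P_X)-r,P_X,W)=E_{\mathrm{a}}(r,P_X,W)$, so $\Prb[\mathcal{A}_1]\doteq\e^{-nE_{\mathrm{a}}(r,P_X,W)}$.

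\emph{Incorrect-decoding exponent and conclusion.} By symmetry and codeword independence, exactly as in the derivation of~\eqref{eqn:PE1} but without the restriction to $\mathcal{S}_{r_n}$, one has $\Prb[\mathcal{E}_1]=\E_{\bm{X}(1),\bm{Y}}\big[1-(1-\Psi(\bm{X}(1),\bm{Y}))^{M_n-1}\big]$. Using that $1-(1-p)^k$ equals $\min\{1,kp\}$ up to universal multiplicative constants, Lemma~\ref{lem:bound_psi} to write $\Psi(\bm{x},\bm{y})\doteq\e^{-n\hat{I}(\bm{x}\wedge\bm{y})}$, and again partitioning by the conditional type $V$ of $\bm{Y}$ given $\bm{X}(1)$, one obtains $\Prb[\mathcal{E}_1]\doteq\sum_V\e^{-n[D(V\|W|P_X^{(n)})+|I(P_X^{(n)},V)-R|^{+}]}$, and since there are only polynomially many conditional types this equals $\e^{-nE_{\mathrm{r}}(R,P_X^{(n)},W)}$ to first exponential order, which by continuity of $E_{\mathrm{r}}$ in $P_X$ tends to $\e^{-nE_{\mathrm{r}}(R,P_X,W)}$ --- the constant-composition random-coding exponent of~\cite{Csi97}, now also established as an ensemble converse via the lower bounds of Lemma~\ref{lem:bound_psi} and $1-(1-p)^k\ge(1-\e^{-1})\min\{1,kp\}$. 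Combining this with the abandonment estimate and the first display gives~\eqref{eqn:E_star}. The ``moreover'' part is then immediate: $P_X\mapsto E_{\mathrm{r}}(R,P_X,W)$ and $P_X\mapsto E_{\mathrm{a}}(r,P_X,W)$ are continuous on the compact simplex $\mathcal{P}(\mathcal{X})$, so $\max_{P_X}E^*(R,r,P_X,W)$ is attained, and every maximizer is a limit of $n$-types $P_X^{(n)}$ and hence realizable by a legitimate constant-composition ensemble, matching Definition~\ref{def:err_exp}.

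\emph{Main obstacle.} The achievability directions are the classical constant-composition random-coding and sphere-packing (equivalently, conditional almost-lossless source coding) computations; the bulk of the work is the ensemble converse, i.e., the matching exponential \emph{lower} bounds on $\Prb[\mathcal{E}_1]$ and $\Prb[\mathcal{A}_1]$. There one must check that the polynomial prefactors in Lemma~\ref{lem:bound_psi} and in the conditional-type counting are genuinely subexponential and hence harmless; that the ``ties-are-errors'' convention built into the modified rank $G$ inflates $\Psi$ only by such polynomial factors; and that the dominant-type argument (picking the minimizing $V$ and applying $1-(1-p)^k\ge(1-\e^{-1})\min\{1,kp\}$) is carried out on the correct set of $(\bm{x},\bm{y})$, with the $o(1)$ threshold shifts absorbed by the continuity of $E_{\mathrm{r}}$ and $E_{\mathrm{sp}}$ in $(R,P_X)$ and $(r,P_X)$. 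None of these steps is deep, but the bookkeeping is where the effort lies.
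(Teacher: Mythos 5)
Your proposal is correct and follows essentially the same route as the paper: sandwich $\varepsilon_n$ between $\max\{\Prb[\mathcal{E}_1],\Prb[\mathcal{A}_1]\}$ and their sum (which is exponentially tight), then compute the two exponents separately via the bounds on $G$ in terms of $\hat{H}$, the bounds on $\Psi$ from Lemma~\ref{lem:bound_psi}, and standard type counting. The only cosmetic difference is that for the ensemble converse on $\Prb[\mathcal{E}_1]$ you use the elementary convexity bound $1-(1-p)^k\ge(1-\e^{-1})\min\{1,kp\}$ in place of the paper's appeal to Shulman's Lemma~A.2 (which gives the constant $1/2$ but handles non-identical probabilities); since here the events $\{G(\bm{X}(w)|\bm{y})\le G(\bm{x}|\bm{y})\}$ are i.i.d.\ conditionally on $(\bm{x},\bm{y})$, your more elementary bound suffices and the constant is immaterial at the exponential scale.
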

%%%%%%
Using the known properties of the random coding exponent $E_{\mathrm{r}}$ and the sphere packing exponent $E_{\mathrm{sp}}$, we deduce that the maximum ensemble exponent $E^*(R,r)>0$ if and only if $R < C(W)$ {\em and}   $r>H_{P_X\times W}(X|Y)$, where $P_X$ is a CAID. 
%%%%%%

For any fixed $P_X$, let $R_{\mathrm{cr}}(P_X,W)$ be the critical rate above which we have $E_{\mathrm{sp}}(R,P_X) = E_{\mathrm{r}}(R,P_X) $ \cite{Csi97}.
%%%%%
In the regime  $R \geq R_{\mathrm{cr}}(P_X,W)$ and $H(P_X)-r \geq R_{\mathrm{cr}}(P_X,W)$, the ensemble exponent in Theorem \ref{thm:ee} simplifies  to 
\begin{equation}
\label{eqn:E_star_above_Rcr}
E^*(R,r,P_X) =  E_{\mathrm{r}}(\max\{R , H(P_X)-r\},P_X).
\end{equation}
%%%%%
Here we can see two distinct effects that influence the ensemble error exponent in relation to the code rate $R$ and the abandonment rate $r$. If the abandonment rate is sufficiently large, more precisely $r\ge H(P_X)-R$, then $E^*(R,r,P_X)$ in~\eqref{eqn:E_star_above_Rcr} reduces to the random coding exponent with input distribution $P_X$ at code rate $R$. On the other hand, if $r$ is sufficiently small, specifically, $r < H(P_X)-R$, then $E^*(R,r,P_X)$ reduces to $\min_V\{ D(V\|W|P_X) + | r- H_{P_X\times V}(X|Y)|^+\}$. This coincides with the error exponent of conditional source coding, $\bm{X} \sim \mathrm{Unif}\big(\mathcal{T}_{P_X} \big)$ being the source and $\bm{Y} \sim W^n(\cdot | \bm{X})$ the side information, a result that can be deduced from~\cite{csiszar82,   oohamaHan94, draper07}. This interpretation is similar to that for our second-order result. 

Finally, by setting $r=H(P_X)-R$ in~\eqref{eqn:E_star}, we recover the usual random coding error exponent $E_{\mathrm{r}}(R,P_X)$. Moreover, since codeword-centric decoding requires $\e^{nR}$ guesses while noise-centric decoding \cite{Duffy2019} requires $\e^{nr}$ guesses, the latter is more efficient wherever $r < R$. This boils down to $R > \frac{1}{2}H(P_X)$ under the condition $r=H(P_X)-R$.

\begin{figure}
    \centering
    \includegraphics[width=0.95\linewidth]{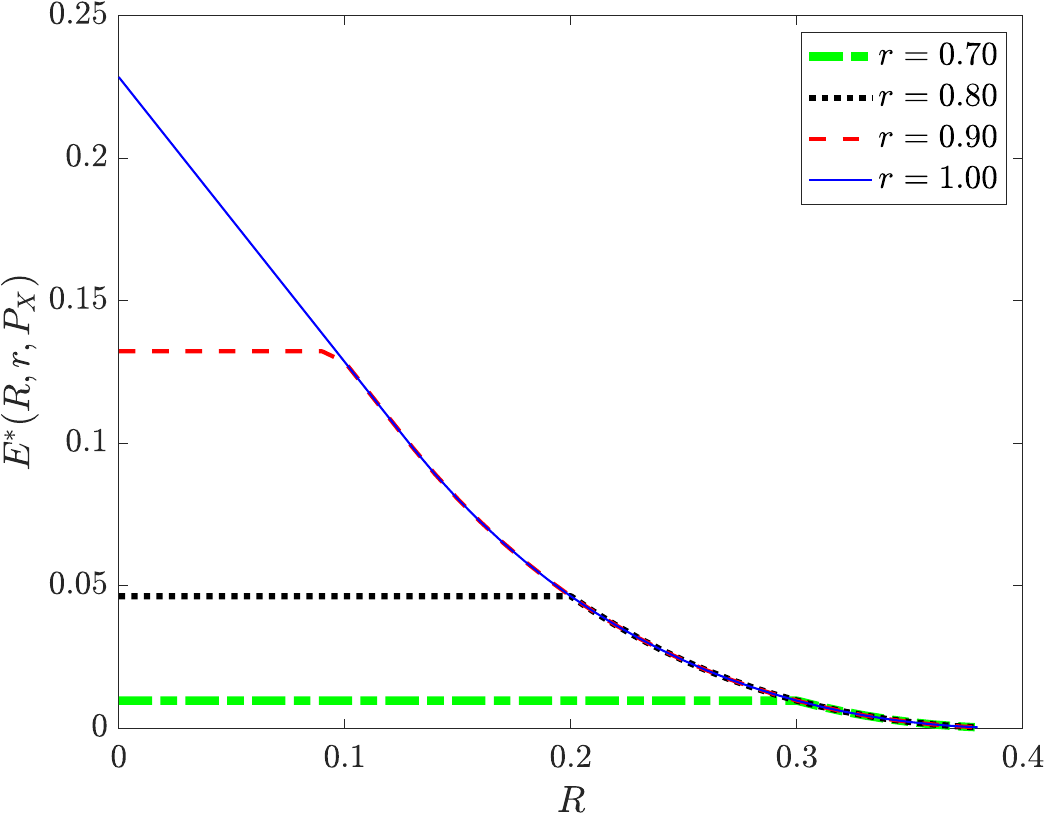}
    \caption{An illustration of $E^*(R,r,P_X)$}
    \label{fig:Estar}
\end{figure}

\begin{example} \red{{\em The error exponent $E^*(R,r,P_X)$ is illustrated in Fig.~\ref{fig:Estar} for the binary asymmetric channel
\begin{align}
    W= \begin{bmatrix}
    0.8   &      0.2\\
0.1      &   0.9
    \end{bmatrix} \label{eqn:bac}
\end{align}
with input distribution $P_X = \mathrm{Ber}(1/2)$. Note that when the abandonment rate $r$ attains its maximum value of $1$ (blue solid line), the exponent simplifies to the  random coding error exponent $E_{\mathrm{r}}(R, P_X)$. For a fixed $r < 1$, the exponent initially remains   flat at small values of $R$, before coinciding with $E_{\mathrm{r}}(R, P_X)$ as $R$ increases. Conversely, for a fixed code rate $R$, decreasing $r$ results in a smaller  exponent, reflecting the fact that decoding is abandoned after fewer guesses.}}
\end{example}

\subsection{Proof of the Achievability Part of Theorem~\ref{thm:ee} }
\label{subsec:achievability_ee}
\red{Fix an} input distribution $P_X$ and a sequence of $n$-types $P_X^{(n)}$ that converges to $P_X$. Since the approximation of distribution by types does not affect the exponents (by continuity), for brevity, we   write $P_X^{(n)}$ as $P_X$ in this and subsequent proofs.

Recall from \eqref{eq:error_union_max_bounds} that the ensemble average error probability is bounded above as $ \varepsilon_n   \leq   \Prb \left[ \mathcal{E}_1\right] + \Prb \left[ \mathcal{A}_1\right]$.
%%%%%
Next, we treat the incorrect decoding probability $\Prb \left[ \mathcal{E}_1\right]$ and the abandonment probability $\Prb \left[ \mathcal{A}_1\right]$ separately. Starting with $\Prb \left[ \mathcal{E}_1\right]$, bounding this is almost identical to bounding the error probability of the MMI decoder. For completeness, we present a proof which is adapted to our setting. We start with
\begin{align}
&\Prb \left[ \mathcal{E}_1\right]  = \Prb \bigg[ \bigcup_{w \neq 1} \left\{ G(\bm{X}(w) | \bm{Y}) \le G(\bm{X}(1)|\bm{Y}) \right\} \bigg] \\ 
& = \E \Bigg[  \Prb \bigg[ \bigcup_{w \neq 1} \left\{ G(\bm{X}(w) | \bm{Y}) \le G(\bm{X}(1)|\bm{Y}) \right\} \Big|\bm{X}(1),\bm{Y} \bigg] \Bigg] \\
& \leq\!  \E \Bigg[ \! \min\bigg\{\! 1 , \!\sum_{w \ne 1}\! \Prb \big[  G(\bm{X}(w) | \bm{Y})\! \le\! G(\bm{X}(1)|\bm{Y}) \big|\bm{X}(1),  \bm{Y}   \big] \!\bigg\} \!\Bigg] \\
& =  \E\Big[ \min \big\{1 , (\e^{nR}-1)\Prb \left[ G(\overline{\bm{X}} | \bm{Y})\le G(\bm{X}|\bm{Y}) \big| \bm{X}, \bm{Y}\right]  \big\}  \Big]  \\
& =  \E\Big[ \min \big\{1 , (\e^{nR}-1)\Psi(\bm{X},\bm{Y})  \big\}  \Big] ,
\label{eq:RCU_UB}
\end{align}
where, as in Section \ref{subsec:proof_second_order},  $\overline{\bm{X}} \sim P_{\bm{X}}$ is independent of $(\bm{X},\bm{Y})$, distributed identically as $\bm{X}$, and $\Psi(\bm{X},\bm{Y}) $ is as defined in \eqref{eqn:def_Psi}.
%%%%%%%
Note that \eqref{eq:RCU_UB} is the so-called random coding union bound (RCU) \cite{Polyanskiy2010}, or more precisely, a mismatched version of it \cite{scarlett_mismatched}.
%%%%%%%

According to Lemma~\ref{lem:bound_psi}, we can upper bound $\Psi(\bm{x},\bm{y})$  by $(n+1)^{3 |\mathcal{X}| |\mathcal{Y}|} \e^{ -n\hat{I} (\bm{x}\wedge\bm{y})}$. Plugging this upper bound back into~\eqref{eq:RCU_UB}, while conditioning on $\bm{X} = \bm{x}$,
we obtain
\begin{align}
&\Prb \left[ \mathcal{E}_1 | \bm{X} = \bm{x} \right] \nonumber\\*
& \leq  \sum_{\bm{y} \in \mathcal{Y}^n} W^n(\bm{y}|\bm{x})\min \left\{ 1, (n+1)^{3|\mathcal{X}||\mathcal{Y}|} \e^{nR-n\hat{I}(\bm{x} \wedge \bm{y})} \right\} \\
 %%%%%
 & \le \sum_{V \in \mathcal{V}_n(\mathcal{Y};P_X)} \e^{-n D(V\|W|P_X)} \nonumber\\* 
&\qquad\qquad\times \min \left\{ 1, (n+1)^{3|\mathcal{X}||\mathcal{Y}|}  \e^{nR-nI(P_X, V )} \right\} \label{eqn:prob_Vshell} \\
%%%%%
 & \leq  (n\!+\!1)^{3|\mathcal{X}||\mathcal{Y}|}\!\!\sum_{V \in \mathcal{V}_n(\mathcal{Y};P_X)}\! \e^{-n D(V\|W|P_X)}  \e^{-n|I(P_X, V ) - R |^+}  \\
%%%%%% 
 & \leq (n\!+\!1)^{4|\mathcal{X}||\mathcal{Y}|}  \max_{V \in \mathcal{P}(\mathcal{Y}|\mathcal{X})}  \e^{-n ( D(V\|W | P_X) + |I(P_X,V)- R|^{+} )} \\
 %%%%%% 
 & = \e^{- n (E_{\mathrm{r}}(R,P_X) - 4\delta_n)} , \label{eqn:PE1_ach}
\end{align}
%%%%%
where~\eqref{eqn:prob_Vshell} follows from splitting the sum over $\bm{y}$ into $V$-shells and noting that $W^n( \mathcal{T}_V( \bm{x})|\bm{x})\le\e^{-nD(V\|W|P_X)} $ \cite[Lemma~2.6]{Csi97} and in \eqref{eqn:PE1_ach}, $\delta_n := \frac{|\mathcal{X}||\mathcal{Y}|}{n}\log(n+1 )$.
Since the bound only depends on $\bm{x}$ through its type $P_X$, and $\bm{X}$ is supported on the type class $\mathcal{T}_{P_X}$, then the bound is also valid for $\Prb \left[ \mathcal{E}_1 \right]$.

Now we proceed to bound the probability of abandonment %Note that
\begin{equation}
    \Prb \left[ \mathcal{A}_1 \right] = \Prb \left[ G(\bm{X}(1)|\bm{Y}) > \e^{nr}  \right]  = \Prb \left[ G(\bm{X}|\bm{Y}) > \e^{nr}  \right].
\end{equation}
%%%%%
From the upper bound in~\eqref{eq:G_bounds}, we know that $G(\bm{x}|\bm{y}) > \e^{nr} $ implies that $\hat{H}(\bm{x}|\bm{y}) \geq r - \delta_n$.
%%%%%
We proceed to bound $\Prb \left[ \mathcal{A}_1 \right]$ while conditioning on $\bm{X} = \bm{x}$ as follows
\begin{align}
  &\Prb \left[ \mathcal{A}_1 | \bm{X} = \bm{x} \right] \nonumber\\
  & \leq \Prb \big[ \hat{H}(\bm{x}|\bm{Y}) \geq r  - \delta_n \big] \\
  & =\sum_{\bm{y} \in \mathcal{Y}^n} W^n(\bm{y}|\bm{x}) \ind\big[ \hat{I}(\bm{x}\wedge\bm{y}) \leq \hat{H}(\bm{x}) - r + \delta_n   \big] \\
  & \le \sum_{V \in \mathcal{V}_n(\mathcal{Y};P_X )}  \e^{-n D(V\|W | P_X)}  \nonumber\\*
  &\qquad\qquad\times\ind \left[  I(P_X,V) \le  H(P_X) - r+\delta_n  \right] \\
%%%%%% 
& \leq (n+1)^{|\mathcal{X}||\mathcal{Y}|}\max_{V : I(P_X,V) \leq H(P_X) - r + \delta_n } \e^{- n D(V\|W |P_X) } \\
& = \e^{- n E_{\mathrm{a}}(r - \delta_n ,P_X) + n \delta_n}.
\end{align}
Yet again, the bound depends on $\bm{x}$ only through its type $P_X$, and is therefore valid for $\Prb \left[ \mathcal{A}_1 \right]$.

From the two bounds, we proceed to upper bound $\varepsilon_n$ as 
%%%%%
\begin{align}
\varepsilon_n & \leq \e^{- n (E_{\mathrm{r}}(R,P_X) - 4 \delta_n)} + \e^{- n (E_{\mathrm{a}}(r - \delta_n,P_X) -\delta_n )}  \\
& \leq 2\e^{4n \delta_n}\max \Big\{\e^{- n E_{\mathrm{r}}(R,P_X)}, \e^{- n E_{\mathrm{a}}(r - \delta_n,P_X)} \Big\} \\
& = \e^{- n ( \min\{ E_{\mathrm{r}}(R,P_X), E_{\mathrm{a}}(r - \delta_n,P_X) \} - 4\delta_n - \frac{1}{n} \log 2)}.
\end{align}
%%%%%
Since $\delta_n \to 0$ as $n \to \infty$, and by continuity, it follows that 
\begin{equation}
 \!\liminf_{n \to \infty} \frac{1}{n}\log\frac{1}{\varepsilon_n } \! \geq \!
 \min\{ E_{\mathrm{r}}(R,P_X,W), E_{\mathrm{a}}(r,P_X,W) \}  \label{eqn:achi_err_exp} 
\end{equation}
which concludes the achievability part.
%\begin{align}
%   &E_{\mathrm{a}}(r,P_X,W)  \nonumber\\*
%   & \geq  \min_{V} D(V\|W | P_X) + \left| I(P_X,V) - H(P_X) + r \right|^{+} \\
%   & =  E_\mathrm{r}(H(P_X) - r,P_X,W).
%\end{align}
%%%%%%%
\subsection{Proof of the Ensemble Converse Part of Theorem~\ref{thm:ee} }
We now  show that \eqref{eqn:achi_err_exp}  is asymptotically tight. 
For this, we start from  $\varepsilon_n \geq \max \left\{  \Prb \left[ \mathcal{E}_1 \right], \Prb \left[ \mathcal{A}_1 \right]   \right\}$.
%%%%%%%
Mirroring the achievability proof, we lower bound $\Prb \left[ \mathcal{E}_1 \right]$ and $\Prb \left[ \mathcal{A}_1 \right] $ individually. 

Starting with $\Prb \left[ \mathcal{E}_1 \right]$, we first observe that the RCU-like upper bound in \eqref{eq:RCU_UB}, which is a truncated union bound, is in fact tight up to a factor of $\frac{1}{2}$ for independent events~\cite[Lemma~A.2]{Shulman}; see also Scarlett {\em et al.}~\cite{scarlett_mismatched}. Here, the mutual independence of the codewords $\bm{X}(w)$, $w\in [M_n]$, in the random codebook ensemble is essential. Thus, we have 
\begin{align}
&\Prb \left[ \mathcal{E}_1 \right] \geq \frac{1}{2}  \E\Big[ \min \big\{1 , (\e^{nR}-1)\Psi(\bm{X},\bm{Y})  \big\}  \Big] .\label{eqn:rcu_lb}
\end{align}
%%%%%
Next, we recall from  Lemma~\ref{lem:bound_psi} that $\Psi(\bm{x},\bm{y})$ is lower bounded by $(n\!+\!1)^{-2|\mathcal{X}||\mathcal{Y}|}\e^{-n\hat{I} (\bm{x}\wedge\bm{y}) } $. Plugging this into \eqref{eqn:rcu_lb}, while conditioning on $\bm{X}=\bm{x}$, we obtain
\begin{align}
    &\Prb \left[ \mathcal{E}_1|\bm{X}=\bm{x} \right] \nonumber\\*  
    &\ge\frac{1}{2} \sum_{\bm{y} \in \mathcal{Y}^n} W^n(\bm{y}|\bm{x})\min \left\{ 1, (n\!+\!1)^{ -2 |\mathcal{X}||\mathcal{Y}|} \e^{nR-n\hat{I}(\bm{x} \wedge \bm{y})} \right\} \\
    &\ge\frac{1}{2} \sum_{V \in \mathcal{V}_n(\mathcal{Y};P_X)} (n+1)^{-|\mathcal{X}||\mathcal{Y}|}  \e^{-n D(V\|W|P_X)} \nonumber\\* 
&\qquad\qquad\times \min \left\{ 1, (n+1)^{-2|\mathcal{X}||\mathcal{Y}|}  \e^{nR-nI(P_X, V )} \right\} \\
& \ge \frac{1}{2}\e^{- n E_{\mathrm{r}}(R,P_X) - 3n \delta_n} , \label{eqn:PE1_conv}
\end{align}
where \eqref{eqn:PE1_conv} follows from similar steps as those leading to \eqref{eqn:PE1_ach}.

Next, we proceed to lower bound $\Prb \left[ \mathcal{A}_1 \right] $.
%%%%%%
Conditioning on $\bm{X} = \bm{x}$, and recalling that $m_n = \e^{nr}$ and $G(\bm{x}|\bm{y})$ both take positive integer values, we observe that
\begin{align}
\Prb \big[ \mathcal{A}_1 | \bm{X} = \bm{x} \big] & =  \Prb \big[ G(\bm{x}|\bm{Y}) > \e^{nr} \big] \\
& = \Prb \big[ G(\bm{x}|\bm{Y}) \geq \e^{nr} + 1  \big] \\
& \geq \Prb \big[ G(\bm{x}|\bm{Y}) \geq 2\e^{nr} \big].
\end{align}
%%%%%%
From the lower bound in~\eqref{eq:G_bounds}, we know that $G(\bm{x}|\bm{y}) \geq 2\e^{nr} $ is implied by $\hat{H}(\bm{x}|\bm{y}) \geq r + \frac{1}{n} \log 2$.
%%%%%%
In what follows, we let $\delta_n' =  \frac{1}{n} \log 2$. We now proceed as
\begin{align}
  &\Prb \big[ \mathcal{A}_1 | \bm{X} = \bm{x} \big] \nonumber\\*
  & \geq \Prb \big[ \hat{H}(\bm{x}|\bm{Y}) \geq r + \delta_n'  \big] \\
  & = \Prb \big[ \hat{I}(\bm{x} \wedge\bm{Y}) \leq \hat{H}(\bm{x}) - r - \delta_n'   \big] \\
  & = \sum_{V \in \mathcal{V}_n(\mathcal{Y};P_X)} \sum_{\bm{y} \in \mathcal{T}_V(\bm{x})}  W^n(\bm{y}|\bm{x}) \nonumber\\*
  &\qquad\qquad\times\ind \left[  I(P_X,V) \le H(P_X) - r - \delta_n'  \right] \\
  & \geq \max_{V : I(P_X,V) \le H(P_X) - r - \delta_n'} \e^{- n D(V\|W |P_X)-n\delta_n }\\
  & = \e^{- n E_{\mathrm{a}}(r + \delta_n'  ,P_X)-n\delta_n}. \label{eqn:PA1_conv}
\end{align}
Combining the bounds in~\eqref{eqn:PE1_conv} and~\eqref{eqn:PA1_conv} and using a similar closing argument as that for the achievability part completes the proof of the ensemble converse.

\section{Ensemble-Tight Strong Converse Exponents} \label{sec:strong_conv_exp}
In this section, we consider the regime in which the ensemble  probability of correct decoding $1-\varepsilon_n$ decays exponentially fast with the blocklength. The exponential rate of decay of $1-\varepsilon_n$ is known as  the {\em strong converse exponent}~\cite{arimoto73,DK79}. This regime is of interest, in usual channel coding with arbitrary decoders and in the absence of abandonment, when the code rate is {\em above} the channel capacity $C(W)$. The main result in this section (Theorem~\ref{thm:sce}) implies that the strong converse exponent is positive if {\em either} the code rate is larger $C(W)$ {\em or} the abandonment rate $r$ is smaller than $H_{P_X\times W}(X|Y)$.

For fixed $R,r,P_X$, and $W$, we define 
\begin{align}
    K_{\mathrm{r}}(R,P_X)  &:= \min_{V}D(V\|W|P_X)+ |R-I(P_X,V) |^+ ,\\*
    K_{\mathrm{sp}}(R,P_X )  &:= \min_{V: I(P_X,V)\ge R}D(V\|W|P_X) . \label{eqn:defKsp}
\end{align}
The subscripts $\mathrm{r}$ and $\mathrm{sp}$ allude  to the fact that these two functions are the strong converse counterparts of the \underline{r}andom coding and \underline{s}phere-\underline{p}acking error exponents. \red{It is clear that $K_{\mathrm{r}}(R,P_X)  \leq K_{\mathrm{sp}}(R,P_X )$.   Arimoto~\cite{arimoto73} demonstrated that the strong converse exponent for the DMC $W$ is at least $\min_{P_X} K_{\mathrm{r}}(R,P_X)$, proving this converse result in an equivalent dual form. Later, Omura~\cite{Omura1975} proved that $\min_{P_X} K_{\mathrm{sp}}(R,P_X)$  is achievable, thereby establishing an upper bound on the strong converse exponent. Subsequently, Dueck and K\"orner~\cite{DK79} refined Omura's approach by employing a code extension argument, showing that $\min_{P_X} K_{\mathrm{r}}(R,P_X)$ is also achievable. This result ultimately established the strong converse exponent for the DMC $W$ for all rates $R$.}

%It is clear that $K_{\mathrm{r}}(R,P_X)  \leq K_{\mathrm{sp}}(R,P_X )$.  Arimoto~\cite{arimoto73} showed that the strong converse exponent for the DMC $W$ is at least $\min_{P_X}K_{\mathrm{r}}(R,P_X)$ (a converse result), expressed in an equivalent dual form. Omura \cite{Omura1975} then showed that $K_{\mathrm{sp}}(R,P_X )$ is achievable, establishing an upper bound on the strong converse exponent. Dueck and K\"orner~\cite{DK79} refined Omura's argument using a code extension argument showing that  $\min_{P_X}K_{\mathrm{r}}(R,P_X)$ is also achievable, hence establishing the strong converse exponent for the DMC $W$.}
%\red{Dueck and K\"orner~\cite{DK79} (see also Arimoto~\cite{arimoto73}) established that  $\min_{P_X}K_{\mathrm{r}}(R,P_X)$ is the strong converse exponent for the DMC $W$.}

\red{It was claimed by Oohama~\cite[first part of Property~3(b)]{oohama15}  that $K_{\mathrm{r}}(R,P_X)=K_{\mathrm{sp}}(R,P_X)$ for all $0\le R\le \log|\mathcal{X}|$.  However, the proof in~\cite{oohama15} contains a gap. In particular, it is claimed\footnote{The error in the first part of Property~3(b) in~\cite{oohama15} was confirmed by Prof.\ Oohama to the authors in a private communication.} that the function $g(V) :=  -I(P,V) + D(V\|W|P)$ is {\em linear}. Unfortunately, one can only verify that  $g(V)$ is convex but not necessarily linear.} %We will see in Lemma~\ref{lem:K_equal} to follow that, unlike their  error exponent counterparts (which are, in general, not equal below the critical rate), $K_{\mathrm{r}}(R,P_X,W)=K_{\mathrm{sp}}(R,P_X,W)$ for all rates $R$, input distributions $P_X$, and DMCs $W$.

In analogy to Definitions~\ref{def:2nd_order} and~\ref{def:err_exp}, we define the ensemble strong converse exponent as follows.
\begin{definition} \label{def:sc_exp}
Fixed a limiting distribution $P_X$. The strong converse exponent $K\in\mathbb{R}_+$ is  {\em achievable} at rate pair $(R,r)$ if the sequence of $(n,R,r, P_{X}^{(n)})$-codes with \red{random-coding} error probabilities $\{\varepsilon_n\}_{n\in\mathbb{N}}$ satisfies  
\begin{align}
    \limsup_{n\to\infty}\frac{1}{n}\log\frac{1}{1-\varepsilon_n}&\le K. \label{eqn:ach_sc_exp}
\end{align}
The infimum of all achievable strong converse exponents for a given rate pair $(R,r)$, limiting input distribution $P_X$ and DMC~$W$ is denoted as $K^*(R,r,P_X)$. Minimizing  $K^*(R,r,P_X)$ over $P_X$ yields the {\em minimium ensemble strong converse exponent} $K^*(R,r)$.
\end{definition}
%%%%%%
We now present the strong converse exponent result.
%%%%%%
\begin{theorem}\label{thm:sce}
    For any $P_X \in \mathcal{P}(\mathcal{X})$ and $(R,r) \in \mathbb{R}^2_{+}$,
    \begin{align}
         K^*(R,r,P_X) =  K_{\mathrm{sp}}(\max\{R,H(P_X) -r\},P_X).\label{eqn:K_star1} 
    \end{align}
    Moreover, $K^*(R,r) = \min_{P_X}K^*(R,r,P_X) $.
\end{theorem}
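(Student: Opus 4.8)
\textbf{Proof plan for Theorem~\ref{thm:sce}.}
The plan is to mirror the structure of the proof of Theorem~\ref{thm:ee}, replacing the error-probability analysis with a correct-decoding-probability analysis, and then to invoke Lemma~\ref{lem:K_equal} to rewrite the resulting exponent in the clean ``max'' form claimed in~\eqref{eqn:K_star1}. Recall that $1-\varepsilon_n = \Prb[\mathcal{E}_1^{\comp}\cap\mathcal{A}_1^{\comp}]$. The natural decomposition is to write $1-\varepsilon_n \le \min\{\Prb[\mathcal{E}_1^{\comp}],\Prb[\mathcal{A}_1^{\comp}]\}$ for one direction and $1-\varepsilon_n \ge 1 - \Prb[\mathcal{E}_1] - \Prb[\mathcal{A}_1]$, or better a direct two-sided bound, for the other. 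Since we want the strong converse \emph{exponent}, i.e.\ the decay rate of $1-\varepsilon_n$, and since $1-\varepsilon_n$ is governed by whichever of $\Prb[\mathcal{E}_1^{\comp}]$ or $\Prb[\mathcal{A}_1^{\comp}]$ decays faster, I expect the exponent to come out as $\max$ of a channel-coding strong converse exponent in $R$ and an ``anti-abandonment'' exponent in $r$ — consistent with the pattern in Theorems~\ref{thm:2nd_order} and~\ref{thm:ee}.

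First I would handle the $\mathcal{E}_1$ side. Conditioning on $\bm X=\bm x$, the probability of correct decoding given no abandonment is $\Prb[\mathcal{E}_1^{\comp}\mid\bm X=\bm x] = \E[(1-\Psi(\bm X,\bm Y))^{M_n-1}\mid\bm X=\bm x]$ (as already computed through~\eqref{eqn:PE1}). Using Lemma~\ref{lem:bound_psi} to sandwich $\Psi(\bm x,\bm y)$ by $\e^{-n\hat I(\bm x\wedge\bm y)}$ up to polynomial factors, and the elementary bounds $1-t\le\e^{-t}$ and $(1-t)^k\ge 1-kt$, this reduces to a type-counting exercise: $(1-\Psi)^{M_n-1}\approx \e^{-M_n\Psi}$, which is $\approx 1$ when $I(P_X,V)>R$ and exponentially small (doubly exponentially, in fact) when $I(P_X,V)<R$. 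Summing $\e^{-nD(V\|W|P_X)}$ over $V$-shells with the appropriate cutoff at $I(P_X,V)\ge R$ gives, up to sub-exponential factors, $\e^{-nK_{\mathrm{sp}}(R,P_X,W)}$ for $\Prb[\mathcal{E}_1^{\comp}]$. Both the achievability (upper bound on $1-\varepsilon_n$) and ensemble converse (lower bound) directions here are standard Arimoto-type computations~\cite{arimoto73,DK79}, made exact by the ensemble setting.

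Next, the $\mathcal{A}_1$ side. Here $\Prb[\mathcal{A}_1^{\comp}\mid\bm X=\bm x] = \Prb[G(\bm x|\bm Y)\le\e^{nr}]$, which by~\eqref{eq:G_bounds} is controlled, up to polynomial factors in the exponent, by $\Prb[\hat H(\bm x|\bm Y)\le r]=\Prb[\hat I(\bm x\wedge\bm Y)\ge H(P_X)-r]$. A type-counting argument again yields $\max_{V:I(P_X,V)\ge H(P_X)-r}\e^{-nD(V\|W|P_X)}$ up to sub-exponential factors, i.e.\ $\e^{-nK_{\mathrm{sp}}(H(P_X)-r,P_X,W)}$. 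Combining the two sides, $1-\varepsilon_n \doteq \max\{\e^{-nK_{\mathrm{sp}}(R,P_X,W)},\e^{-nK_{\mathrm{sp}}(H(P_X)-r,P_X,W)\}}$, so the exponent is $\min\{K_{\mathrm{sp}}(R,P_X,W),K_{\mathrm{sp}}(H(P_X)-r,P_X,W)\}$; since $K_{\mathrm{sp}}(\cdot,P_X,W)$ is non-decreasing, this equals $K_{\mathrm{sp}}(\max\{R,H(P_X)-r\},P_X,W)$. Finally, invoking Lemma~\ref{lem:K_equal} replaces $K_{\mathrm{sp}}$ by $K_{\mathrm{r}}$, giving~\eqref{eqn:K_star1}; the last claim $K^*(R,r,W)=\min_{P_X}K^*(R,r,P_X,W)$ is immediate from Definition~\ref{def:sc_exp}.

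The main obstacle I anticipate is not the exponent bookkeeping but the lower bound on $1-\varepsilon_n$ in the ensemble converse direction, where the union bound $1-\varepsilon_n\ge 1-\Prb[\mathcal{E}_1]-\Prb[\mathcal{A}_1]$ is too lossy (it can be negative). One must instead argue directly that the two ``bad'' events, restricted to the dominant $V$-shells, are essentially disjoint or can be separated, so that $\Prb[\mathcal{E}_1^{\comp}\cap\mathcal{A}_1^{\comp}]$ is comparable to $\min\{\Prb[\mathcal{E}_1^{\comp}],\Prb[\mathcal{A}_1^{\comp}]\}$ — plausibly by conditioning on $\bm Y$'s conditional type $V$ and observing that for a typical $V$ achieving one of the two minima, the other event occurs with probability $\approx 1$. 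A secondary technical point is ensuring that the polynomial slack from Lemma~\ref{lem:bound_psi} and from~\eqref{eq:G_bounds}, together with the $\frac12$ factor and the $\e^{-M_n\Psi}$ approximation, all wash out in the $\frac1n\log$ limit, which they do since every stray factor is $\e^{o(n)}$.
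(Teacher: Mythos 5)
Your plan follows the paper's proof in essence: the ensemble converse is the union bound $1-\varepsilon_n \le \min\{\Prb[\mathcal{E}_1^{\comp}],\Prb[\mathcal{A}_1^{\comp}]\}$, with Dueck--K\"orner handling $\Prb[\mathcal{E}_1^{\comp}]$ and a direct type-count (via the sets $\mathcal{S}_r(\bm x)\cap\mathcal{T}_V(\bm x)$) handling $\Prb[\mathcal{A}_1^{\comp}]$; the achievability is the intersection argument you anticipate, carried out in the paper by restricting to $\mathcal{S}_r\cap\mathcal{H}_n$ where $\mathcal{H}_n:=\{(\bm x,\bm y):\Psi(\bm x,\bm y)\le\frac{1}{M_n-1}\}$, so both $\overline{\varphi}(\bm x,\bm y)\ge \frac14$ (eventually) and the no-abandonment condition hold, and the two constraints collapse via Lemma~\ref{lem:bound_psi} and \eqref{eq:G_bounds} into the single type constraint $I(P_X,V)\ge\max\{R,H(P_X)-r\}+2\delta_n$. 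Invoking Lemma~\ref{lem:K_equal} at the end to pass from $K_{\mathrm{sp}}$ to $K_{\mathrm{r}}$ is exactly what the paper does. So the obstacle you ``anticipate'' in the lower bound is indeed the crux, and your proposed resolution (conditioning on the dominant $V$-shell, on which both bad events are avoided w.h.p.)~is precisely what the intersection $\mathcal{S}_r\cap\mathcal{H}_n$ encodes.

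Two things in your write-up are inverted. First, in the strong converse exponent regime the labels are swapped relative to what you wrote: the \emph{achievability} part is the lower bound on $1-\varepsilon_n$ (a smaller claimed $K$ is stronger, and one proves $1-\varepsilon_n\ge\e^{-nK(1+o(1))}$), while the \emph{ensemble converse} is the upper bound. Second, the min/max bookkeeping in your final combination step is internally inconsistent. From $1-\varepsilon_n\le\min\{\Prb[\mathcal{E}_1^{\comp}],\Prb[\mathcal{A}_1^{\comp}]\}$ with $\Prb[\mathcal{E}_1^{\comp}]\doteq\e^{-nK_{\mathrm{sp}}(R,P_X,W)}$ and $\Prb[\mathcal{A}_1^{\comp}]\doteq\e^{-nK_{\mathrm{sp}}(H(P_X)-r,P_X,W)}$, one gets $1-\varepsilon_n\doteq\min\{\e^{-nK_{\mathrm{sp}}(R,P_X,W)},\e^{-nK_{\mathrm{sp}}(H(P_X)-r,P_X,W)}\}$, not the $\max$ you wrote; the corresponding exponent is $\max\{K_{\mathrm{sp}}(R,P_X,W),K_{\mathrm{sp}}(H(P_X)-r,P_X,W)\}$, not $\min$. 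Your next line then compounds this: if the exponent really were $\min\{K_{\mathrm{sp}}(R,P_X,W),K_{\mathrm{sp}}(H(P_X)-r,P_X,W)\}$, monotonicity of $K_{\mathrm{sp}}$ would give $K_{\mathrm{sp}}(\min\{R,H(P_X)-r\},P_X,W)$, not the $K_{\mathrm{sp}}(\max\{R,H(P_X)-r\},P_X,W)$ you state. The two slips happen to cancel and land on the correct answer, but a reader following the chain literally would be derailed; you should correct it to read: $\max$ of the two $K_{\mathrm{sp}}$'s, which by monotonicity equals $K_{\mathrm{sp}}(\max\{R,H(P_X)-r\},P_X,W)$.
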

\red{Similar to Theorem~\ref{thm:ee}, Theorem~\ref{thm:sce} demonstrates that two factors influence the ensemble strong converse exponent: the code rate $R$ and the abandonment rate $r$. The interpretations for the cases $r \geq H(P_X) - R $  and $ r < H(P_X) - R $ are analogous to those discussed following Theorem~\ref{thm:ee}. The primary difference is that these interpretations apply across all rates in the strong converse regime, as there is no critical rate in this scenario~\cite{DK79}.}
%%%%%%

\begin{figure}
    \centering
    \includegraphics[width=0.95\linewidth]{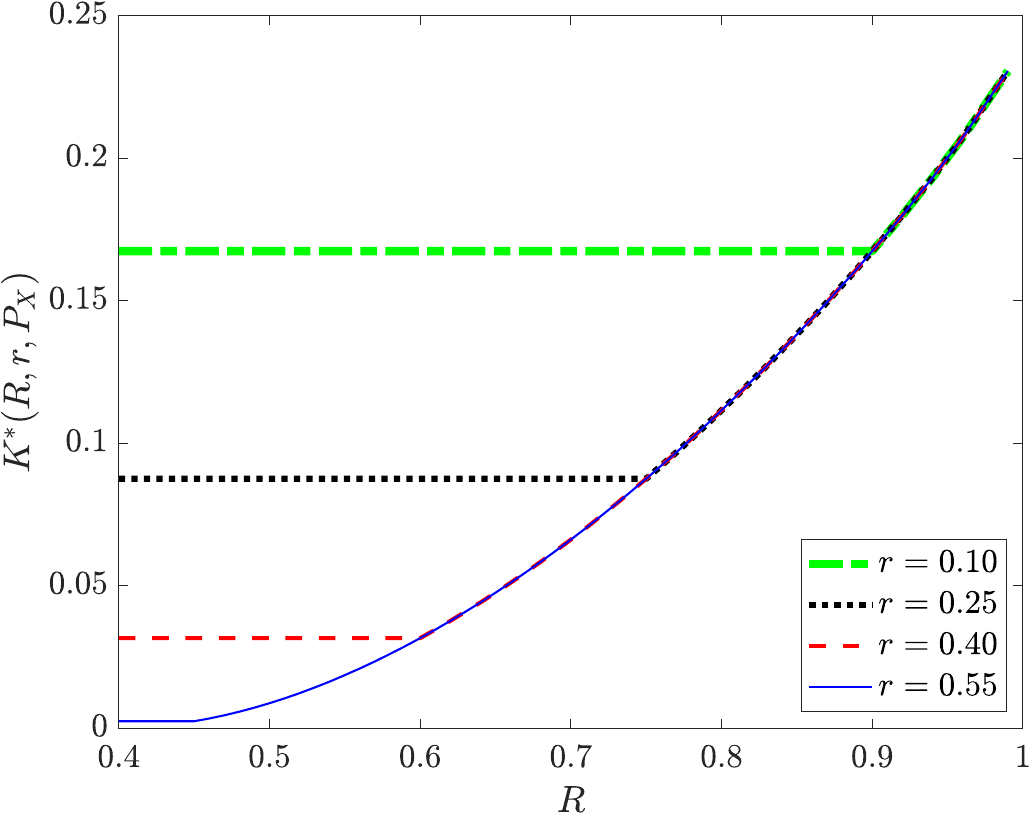}
    \caption{An illustration of $K^*(R,r,P_X)$}
    \label{fig:Kstar}
\end{figure}

\red{If $r \ge H(P_X) - R$, then $K^*(R, r, P_X)$ in~\eqref{eqn:K_star1} simplifies to $K_{\mathrm{sp}}(R, P_X)$. While this resembles $K_{\mathrm{r}}(R, P_X)$, the strong converse exponent for channel coding with a fixed input distribution $P_X$~\cite{DK79}, it is important to note that they are not identical. As noted earlier, we generally have $ K_{\mathrm{r}}(R, P_X) \le K_{\mathrm{sp}}(R, P_X) $. Consequently, when $ r \ge H(P_X) - R $ (implying the code rate is dominant), our ensemble strong converse exponent result is weaker than the classical strong converse exponent established by Dueck and K\"orner~\cite{DK79}. This is because, unlike standard information-theoretic proofs in the strong converse regime, our approach involves bounding the {\em ensemble} average correct decoding probability, which may result in some looseness. A similar situation occurs for rates below capacity, where Gallager showed that the exponent of the ensemble average error probability is the random coding error exponent~\cite{gallager73}.}

\red{If $r<H(P_X)-R$ (implying that the abandonment rate is dominant), then $K^*(R,r,P_X)$ specializes to $K_{\mathrm{sp}}(H(P_X)-r, P_X)$, which coincides with the strong converse exponent for the conditional source coding  setting with source $\bm{X} \sim \mathrm{Unif}\big(\mathcal{T}_{P_X^{(n)}} \big)$ and  side information $\bm{Y} \sim W^n(\cdot | \bm{X})$.
%This exponent is given by the expression 
In this case, $K_{\mathrm{sp}}(H(P_X)-r, P_X)$ is equal to  $K_{\mathrm{r}}(H(P_X)-r, P_X) = \min_V \big\{ D(V\|W|P_X)+ | H_{P_X\times V}(X|Y)-r|^+\big\}$. The equivalence between these two exponents was established by  Oohama and Han \cite[Property~3]{oohamaHan94}.}

\begin{example} \red{{\em The strong converse exponent $K^*(R,r,P_X)$ is illustrated in Fig.~\ref{fig:Kstar} for the binary asymmetric channel in~\eqref{eqn:bac} with uniform input distribution. Since the optimization problem defining $K_{\mathrm{sp}}(R,P_X)$ is non-convex, we used a grid search to solve the constrained conditional KL divergence minimization problem in \eqref{eqn:defKsp}. Observe from Fig.~\ref{fig:Kstar} that for a fixed abandonment rate $r$, $K^*(R,r,P_X)$ is initially flat and then increases in accordance with the shape of $K_{\mathrm{sp}}(R,P_X)$. On the other hand, for a fixed code rate $R$, $K^*(R,r,P_X)$ decreases as~$r$ increases. This reflects the fact that the strong converse exponent decreases (improves) when decoding is terminated after more guesses.}} 
\end{example}

\subsection{Proof of the Achievability Part of Theorem~\ref{thm:sce} }
For this part, we are required to {\em lower bound}  the ensemble probability of correct decoding
\begin{align}
    1-\varepsilon_n&= \Prb\left[ \mathcal{E}_1^{\comp}\cap \mathcal{A}_1^{\comp} \right] \\
    &= \sum_{\bm{x},\bm{y} } P_{\bm{X}}(\bm{x})W^n(\bm{y}|\bm{x})   \big(1-\varphi( \bm{x},\bm{y}) \big),
\end{align}
where recall  that $\varphi(\bm{x},\bm{y}):= \Prb \left[ \mathcal{E}_1 \cup \mathcal{A}_1 \mid \bm{X}(1) =\bm{x}, \bm{Y}=\bm{y} \right]$ (see~\eqref{eqn:varphi_def}).  For brevity, we write $\overline{\varphi}(\bm{x},\bm{y}):=1-\varphi( \bm{x},\bm{y})$.

It is easy to see  that $(\bm{x}, \bm{y})\notin\mathcal{S}_r$ (recall the definition of the set $\mathcal{S}_r$ in \eqref{eqn:def_Sr}),  $\overline{\varphi}(\bm{x},\bm{y})=0$. Hence,
\begin{align}
    1-\varepsilon_n=\sum_{ (\bm{x},\bm{y} )\in \mathcal{S}_r } P_{\bm{X}}(\bm{x})W^n(\bm{y}|\bm{x})  \overline{\varphi}(\bm{x},\bm{y}). \label{eqn:sc1}
\end{align}
We now lower bound $\overline{\varphi}(\bm{x},\bm{y})$ assuming that $(\bm{x},\bm{y} )\in \mathcal{S}_r$. Using the independence of the codewords $\bm{X}(w),w\in [M_n]$, 
\begin{align}
    &\overline{\varphi}(\bm{x},\bm{y}) \nonumber\\*
    &= \! \prod_{w\ne 1} \! \Prb\Big[\big\{ G(\bm{X}(w) | \bm{Y})\! \ge\!  G(\bm{X}(1) | \bm{Y}) \big\} \Big| \bm{X}(1) \! =\!\bm{x} ,\bm{Y}\! =\!\bm{y}\Big]\!  \\ 
     &= \big(  1-\Psi(\bm{x},\bm{y}) \big)^{M_n-1}.\label{eqn:sc2}
\end{align}
Define the set 
\begin{align}
    \mathcal{H}_n:= \left\{ (\bm{x},\bm{y}):  \Psi(\bm{x},\bm{y})\le\frac{1}{M_n-1} \right\}.
\end{align}
Then for all $(\bm{x},\bm{y} )\in \mathcal{S}_r \cap\mathcal{H}_n$, we have 
\begin{align}
    \!\!\big(  1\!-\!\Psi(\bm{x},\bm{y}) \big)^{M_n-1} \ge \Big( 1\!-\!\frac{1}{M_n\!-\! 1} \Big)^{M_n-1} \!\to\! \e^{ -1}\approx 0.37.
\end{align}
From the above calculations, for $n$ large enough, we have 
\begin{align}
    &1-\varepsilon_n \nonumber\\*
    & \ge\sum_{(\bm{x},\bm{y} )\in \mathcal{S}_r \cap\mathcal{H}_n} P_{\bm{X}}(\bm{x})W^n(\bm{y}|\bm{x}) \overline{\varphi}(\bm{x},\bm{y}) \\
    &\ge\frac{1}{4} \Prb\left[(\bm{X},\bm{Y})\in  \mathcal{S}_r \cap\mathcal{F}_n \right]\\
    &\ge\frac{1}{4}\Prb\bigg[ \left\{ G(\bm{X}|\bm{Y})\le \e^{nr}\right\}\cap \Big\{ \Psi( \bm{X},\bm{Y} ) \le\frac{1}{M_n-1}\Big\} \bigg]\\
    &\ge\frac{1}{4} \sum_{\bm{x}\in\mathcal{T}_{P_X}} \frac{1}{| \mathcal{T}_{P_X} |} \sum_{V\in \mathcal{V}_n(\mathcal{Y};P_X) }W^n(\mathcal{T}_V( \bm{x}) |\bm{x})  \nonumber\\*
    &\qquad\times \mathbbm{1}\big[     I (P_X,V)\ge\max\{ H(P_X)-r , R\} +2\delta_n \big]  \label{eqn:use_bds}\\
    %& = \frac{1}{4} \sum_{\bm{x}\in\mathcal{T}_P} \frac{1}{| \mathcal{T}_P |} \sum_{V\in \mathcal{V}_n(P) }W^n(\mathcal{T}_V( \bm{x}) |\bm{x}) \mathbbm{1} \Big[  I (P,V)	\ge \max\{R,H(P)-r \} \Big] \\
    &\ge\frac{1}{4}\e^{  -n K_{\mathrm{sp}}( \max\{ R,H(P_X)-r \} +2\delta_n,P_X)},
\end{align}
where \eqref{eqn:use_bds} uses the upper bound on $G(\bm{x}|\bm{y})$ stated in \eqref{eq:G_bounds} and the lower bound on $\Psi(\bm{x},\bm{y})$ stated in Lemma \ref{lem:bound_psi}. Thus, we obtain 
\begin{align}
    \limsup_{n\to\infty}\frac{1}{n}\log\frac{1}{1-\varepsilon_n}\le K_{\mathrm{sp}}(\max\{R, H(P_X)-r\},P_X).
\end{align}
%, and $\delta_n=\frac{|\mathcal{X}||\mathcal{Y}|}{n}\log (n+1)$.
%To complete the proof, we use Lemma~\ref{lem:K_equal} (which states that $K_{\mathrm{sp}}=K_{\mathrm{r}}$) to assert that the strong converse exponent $K_{\mathrm{r}}(\max\{ H(P_X)-r , R\},P_X,W)$ is achievable.
%%%%%
\subsection{\red{Proof of the Ensemble Converse Part of Theorem~\ref{thm:sce} }}
\label{subsec:converse_sce}
For this part, we first define 
\begin{align}
    R'&:= R-\frac{2 (|\mathcal{X} ||\mathcal{Y}|+1)\log(n+1)}{n}\label{eqn:def_Rprime} \quad\mbox{and}\\
    \mathcal{J}_{R'}&:= \big\{(\bm{x},\bm{y} ): \hat{I}(\bm{x}\wedge\bm{y} )\ge R' \big\}.
\end{align}
Recall the definition of the set $\mathcal{S}_r$ in \eqref{eqn:def_Sr}. From~\eqref{eqn:sc1} and~\eqref{eqn:sc2}, we note that the ensemble probability of correct decoding can be expressed as
\begin{align}
&1-\varepsilon_n= \sum_{(\bm{x},\bm{y} )\in \mathcal{S}_r }P_{\bm{X}}(\bm{x})W^n(\bm{y} | \bm{x}) \big( 1 - \Psi(\bm{x},\bm{y}) \big)^{M_n - 1}. %\\
    %& \le  \sum_{(\bm{x},\bm{y} )\in \mathcal{S}_r}P_{\bm{X}}(\bm{x})W^n(\bm{y} | \bm{x})\bigg( 1 -  \frac{\e^{-n\hat{I}(\bm{x}  \wedge \bm{y} )}}{(n+1)^{ 2|\mathcal{X} ||\mathcal{Y}| }} \bigg)^{M_n - 1} \label{eqn:bd_psi}\\
    %&= (\vardiamond)+ (\varheart),
\end{align}
This sum can be split into  %where \eqref{eqn:bd_psi} follows from the lower bound on $\Psi(\bm{x},\bm{y})$ stated in Lemma \ref{lem:bound_psi}  and 
\begin{align}
(\vardiamond):=\sum_{(\bm{x},\bm{y} )\in \mathcal{S}_r \cap\mathcal{J}_{R'} }\!P_{\bm{X}}(\bm{x})W^n(\bm{y} | \bm{x})\big( 1 -\Psi(\bm{x},\bm{y})\big)^{M_n - 1} \!, 
\end{align}
and
\begin{align}
(\varheart) := \sum_{(\bm{x},\bm{y} )\in \mathcal{S}_r \cap\mathcal{J}_{R'}^{\comp} }\!P_{\bm{X}}(\bm{x})W^n(\bm{y} | \bm{x})\big( 1 -\Psi(\bm{x},\bm{y})\big)^{M_n - 1} \!.
\end{align}
% \begin{align}
% \!(\varheart)& \!:= \!\sum_{(\bm{x},\bm{y} )\in \mathcal{S}_r \cap\mathcal{J}_{R'}^{\comp}} \!\!P_{\bm{X}}(\bm{x})W^n(\bm{y} | \bm{x})\bigg( 1 \! -\!  \frac{\e^{-n\hat{I}(\bm{x}  \wedge \bm{y} )}}{(n\!+\!1)^{ 2|\mathcal{X} ||\mathcal{Y}| }} \bigg)^{M_n - 1} \!.
% \end{align}
We first consider $(\varheart)$, which can be upper bounded by summing only over those $(\bm{x},\bm{y} )$ in $\mathcal{J}_{R'}^{\comp}$ (dropping $\mathcal{S}_r $), i.e., 
\begin{align}
    (\varheart) & \le \sum_{(\bm{x},\bm{y} )\in   \mathcal{J}_{R'}^{\comp} } P_{\bm{X}}(\bm{x})W^n(\bm{y} | \bm{x})\big( 1 -\Psi(\bm{x},\bm{y})\big)^{M_n - 1} \! \\
    &\le\sum_{(\bm{x},\bm{y} )\in   \mathcal{J}_{R'}^{\comp} } \! P_{\bm{X}}(\bm{x})W^n(\bm{y} | \bm{x})\bigg( 1 \! -\! \frac{\e^{-n\hat{I}(\bm{x}\wedge\bm{y})}}{(n\! +\! 1)^{2|\mathcal{X}||\mathcal{Y}| }} \bigg)^{M_n - 1}, \! \label{eqn:bd_psi}
\end{align}
where  \eqref{eqn:bd_psi} follows from the lower bound on $\Psi(\bm{x},\bm{y})$ stated in Lemma \ref{lem:bound_psi}.
By using the fact that the code has constant composition $P_X$, 
\begin{align}
    \!(\varheart)\!\le\! \sum_{V:  I(P_X,V) < R'} \!\!\e^{-nD(V\|W|P_X)}\left( 1 \! -\!  \frac{\e^{-n I (P_X,V )}}{(n\!+\!1)^{ 2|\mathcal{X} ||\mathcal{Y}| }} \right)^{M_n - 1} \!\! .
\end{align}
Furthermore, for $n$ large enough, $M_n-1\ge\frac{1}{2} \e^{nR}$. Combining this with the inequality $(1-x)^k\le\exp(-kx)$, we obtain 
\begin{align}
  (\varheart)& 
  \le   \!\! \sum_{V : I(P_X,V) < R'} \!\!\!\e^{-nD(V\|W|P_X)}\exp\bigg( \! -\! \frac{\e^{-nI(P_X,V )}}{  (n \!+\! 1)^{ 2|\mathcal{X} ||\mathcal{Y}| }}  \frac{\e^{nR}}{2} \bigg) \\
  & \le  \!\!  \sum_{V : I(P_X,V) < R'} \!\!\!\e^{-nD(V\|W|P_X)}\exp\bigg(\!- \!\frac{\e^{-nR' }}{  (n \!+\! 1)^{ 2|\mathcal{X} ||\mathcal{Y}| }}  \frac{\e^{nR}}{2} \bigg)\label{eqn:use_Rprime}\\
  &\le \sum_{V : I(P_X,V) < R'}  \!\!\e^{-nD(V\|W|P_X)}\exp\bigg(-\frac{1}{2}(n\!+\!1)^2 \bigg) , \label{eqn:double_exp}
\end{align}
where~\eqref{eqn:double_exp} uses the definition of $R'$ in \eqref{eqn:def_Rprime}. Thus, $(\spadesuit)$ decays faster than any exponential in $n$. 

The term $(1-\Psi(\bm{x},\bm{y}) )^{M_n-1}$ in $(\vardiamond)$  can be trivially upper bounded by $1$. This means that 
\begin{align}
(\vardiamond) &\le   \sum_{(\bm{x},\bm{y} )\in \mathcal{S}_r \cap\mathcal{J}_{R'} }\!\!P_{\bm{X}}(\bm{x})W^n(\bm{y} | \bm{x}) \\
&\le \sum_{V\in \mathcal{V}_n(\mathcal{Y};P_X)}\e^{-nD(V\|W|P_X )}\times \mathbbm{1}\left[I(P_X,V)\ge R'\right]  \\
&\qquad\qquad\times \mathbbm{1}\left[H_{P_X\times V}(X|Y)\le r\right] ,\label{eqn:bd_diam}
\end{align}
where in~\eqref{eqn:bd_diam}, we used the fact that $G(\bm{x}|\bm{y})\le \e^{nr}$ implies that $\hat{H}(\bm{x}|\bm{y})\le r$. 

Combining \eqref{eqn:double_exp} and~\eqref{eqn:bd_diam}, we obtain 
\begin{align}
    \!1-\varepsilon_n &\le 2 \sum_{V\in \mathcal{V}_n(\mathcal{Y};P_X)}\!\e^{-nD(V\|W|P_X )}\times \mathbbm{1}\left[I(P_X,V)\!\ge\! R'\right]  \\
&\qquad\qquad\times \mathbbm{1}\left[H_{P_X\times V}(X|Y)\le r\right] 
\end{align}
for all $n$ sufficiently large. This implies that 
\begin{align}
    \liminf_{n\to\infty}\frac{1}{n}\log\frac{1}{1-\varepsilon_n}\ge K_{\mathrm{sp}}(\max\{R, H(P_X)-r\},P_X).
\end{align}

\section{Conclusion and Future Work}
%\textcolor{red}{TBD}
This paper elucidates the interplay between the code and abandonment rates in terms of the second-order, error exponent and strong converse exponent asymptotics for guessing-based decoders with abandonment. The second-order results suggest that unless the two ``backoff terms'' $s$ and $t$ are equal, either (but not both) the effect of regular decoding or abandonment dominates the ensemble error probability if it is designed to be non-vanishing. The exponent results underscore  the dichotomy between the code rate and abandonment rate. 

For future work, one can consider extending the current results to other families of channels such as additive white Gaussian noise channels. Additionally, a more ambitious direction involves establishing a ``partial information-theoretic converse'' of the following form: consider all decoding strategies where  all  input sequences are ranked according to a metric  $q : \mathcal{X}^n\times\mathcal{Y}^n\to\mathbb{N}$ (using, for example, the metric $ G(\bm{x}|\bm{y}) $ in this work), but the search is terminated after $m_n=\e^{n r_n}$ guesses. An important question is to determine the fundamental limits of such schemes in various asymptotic regimes, especially when we are permitted to optimize over both the code design   and the decoding metric $q$. This is in contrast with our specific setup of constant composition codes and empirical entropy-based ranking with abandonment.
%{\color{green}[Perhaps we should also relax the constant composition assumption? s no refer to a type class in the above?]}
% For future work, we plan to extend the current framework to the error exponent and strong converse exponent regimes. 
%%%%%%%%%
\appendices
\section{Proof of Lemma~\ref{lem:bound_psi} } \label{app:prf_bound_psi}
\begin{proof} \red{In this proof, we make   use of the following well known bounds on the size of the type class and the $V$-shell~\cite[Lemma~2.5]{Csi97}:
\begin{align}
   (n\!+\!1)^{-|\mathcal{X}|}\e^{nH(P_X)} &\le |\mathcal{T}_{P_X}|\le \e^{nH(P_X)},\\
   (n\!+\!1)^{-|\mathcal{X}||\mathcal{Y}|  } \e^{nH(V_{X|Y}|\hat{P}_{\bm{y}})}  &\!\le\! |\mathcal{T}_{V_{X|Y}}(\bm{y}) |  \!\le\! \e^{nH(V_{X|Y}|\hat{P}_{\bm{y}})} . \label{eqn:bd_Vshell}
\end{align}}
For the upper bound on $\Psi(\bm{x}, \bm{y})$, we observe from \eqref{eq:G_bounds} that 
$G(\overline{\bm{x}} | \bm{y}) \le G(\bm{x}|\bm{y})$ implies $\hat{H}( \overline{\bm{x}} | \bm{y}) \leq \hat{H}( \bm{x} | \bm{y}) + \delta_n$, where 
$\delta_n = \frac{|\mathcal{X}||\mathcal{Y}|}{n}\log(n+1) $.
%%%%%
Therefore, we write 
\begin{align}
&\Psi(\bm{x}, \bm{y})\nonumber\\*
& \leq 
\Prb \big[ \hat{H}(\overline{\bm{X}} | \bm{y}) \leq \hat{H}(\bm{x}|\bm{y}) + \delta_n \big]  \\
& = \frac{1}{|\mathcal{T}_{P_X}|} \sum_{\overline{\bm{x}} \in  \mathcal{T}_{P_X} } \ind \big[ \hat{H}( \overline{\bm{x}} | \bm{y}) \leq \hat{H}(\bm{x} | \bm{y}) + \delta_n \big] \\
%%%%%
& = \frac{1}{|\mathcal{T}_{P_X}|} \!  \sum_{V_{X|Y}: V_{X|Y}\hat{P}_{\bm{y}} = P_X} \sum_{\overline{\bm{x}} \in \mathcal{T}_{V_{X|Y}}(\bm{y}) } \!\!\ind \big[ \hat{H}( \overline{\bm{x}} | \bm{y}) \!\leq\! \hat{H}(\bm{x} | \bm{y}) + \delta_n \big] \\
%%%%%
& = \frac{1}{|\mathcal{T}_{P_X}|} \sum_{\substack{ V_{X|Y}: V_{X|Y}\hat{P}_{\bm{y}} = P_X, \\ H(V_{X|Y}|\hat{P}_{\bm{y}}) \leq \hat{H}(\bm{x}|\bm{y}) + \delta_n } }|\mathcal{T}_{V_{X|Y}}(\bm{y})| \\
%%%%%
& \leq (n\!+\!1)^{|\mathcal{X}||\mathcal{Y}|}\e^{-nH(P_X)}\!\!\!\! \sum_{\substack{ V_{X|Y}: V_{X|Y}\hat{P}_{\bm{y}} = P_X, \\ H(V_{X|Y}|\hat{P}_{\bm{y}}) \leq \hat{H}(\bm{x}|\bm{y}) + \delta_n }} \!\!\!\! \!\e^{nH(V_{X|Y}|\hat{P}_{\bm{y}})} \\
%%%%%%
& \leq  (n+1)^{2|\mathcal{X}||\mathcal{Y}|} \e^{-n\hat{I}(\bm{x}\wedge\bm{y}) + n \delta_n}.
\end{align}
%%%%%

The lower bound is obtained in a similar fashion by noting, again from \eqref{eq:G_bounds}, that $G( \overline{\bm{x}} | \bm{y}) \le G(\bm{x} | \bm{y})$ is implied by the condition  $\hat{H}( \overline{\bm{x}} | \bm{y}) + \delta_n \le\hat{H}(\bm{x} | \bm{y})$. Therefore,
%%%%%%
\begin{align}
&\Psi(\bm{x}, \bm{y})\nonumber\\*
& \geq \frac{1}{|\mathcal{T}_{P_X}|} \sum_{\overline{\bm{x}} \in \mathcal{T}_{P_X} } \ind \big[ \hat{H}( \overline{\bm{x}} | \bm{y}) \le\hat{H}(\bm{x} | \bm{y}) - \delta_n \big] \label{eqn:Psi_tobound}\\
%%%%
& = \frac{1}{|\mathcal{T}_{P_X}|} 
\sum_{V_{X|Y} : V_{X|Y} \hat{P}_{\bm{y}} = P_X}
\sum_{\overline{\bm{x}} \in \mathcal{T}_V(\bm{y}) } \ind \big[ \hat{H}( \overline{\bm{x}} | \bm{y}) \le \hat{H}(\bm{x} | \bm{y}) - \delta_n \big] \\
%%%%
& = \frac{1}{|\mathcal{T}_{P_X}|} 
\sum_{\substack{ V_{X|Y} : V_{X|Y} \hat{P}_{\bm{y}} = P_X,  \\H(V_{X|Y}|\hat{P}_{\bm{y}}) \le \hat{H}(\bm{x}|\bm{y}) - \delta_n } } |\mathcal{T}_V(\bm{y})| \\
%%%%
& \geq \frac{1}{|\mathcal{T}_{P_X}|} \!\!
\sum_{ \substack{ V_{X|Y} : V_{X|Y} \hat{P}_{\bm{y}} = P_X,  \\H(V_{X|Y}|\hat{P}_{\bm{y}}) \le \hat{H}(\bm{x}|\bm{y}) - \delta_n} }\!\!
(n+1)^{- |\mathcal{X}||\mathcal{Y}|} \e^{n H(V_{X|Y}|\hat{P}_{\bm{y}})} \\
%%%%
& \geq \frac{1}{|\mathcal{T}_{P_X}|} 
(n+1)^{- |\mathcal{X}||\mathcal{Y}|} \!\!\max_{ \substack{ V_{X|Y} : V_{X|Y} \hat{P}_{\bm{y}} = P_X, \\ H(V_{X|Y}|\hat{P}_{\bm{y}}) \le \hat{H}(\bm{x}|\bm{y}) - \delta_n} }\!\! \e^{n H(V_{X|Y}|\hat{P}_{\bm{y}})} \\
%%%%
& \geq 
(n+1)^{- |\mathcal{X}||\mathcal{Y}|} \e^{-n \hat{I}(\bm{x} \wedge \bm{y}) - n\delta_n}. \label{eqn:compareG}
\end{align}
This completes the proof of Lemma \ref{lem:bound_psi}.
%%%%%
\end{proof} 

\paragraph*{Acknowledgments}
%Discussions with Prof.\ Y.~Oohama and Prof.\ B.~Nakibo\u{g}lu are gratefully acknowledged.  In particular,  Prof.\ Nakibo\u{g}lu pointed out an error in the proof of Theorem~\ref{thm:sce} in an earlier version of this paper.
We gratefully acknowledge discussions with Prof.~Y.~Oohama and Prof.~B.~Nakiboğlu. In particular, Prof.\ Nakiboğlu identified an error in the proof of Theorem~\ref{thm:sce} in an earlier version of this paper.

\bibliographystyle{IEEEtran}
\bibliography{References}

\begin{IEEEbiographynophoto}{Vincent Y. F. Tan} (Senior Member, IEEE) was born in Singapore, in 1981.
He received the B.A.\ and M.Eng.\ degrees in electrical and information
science from Cambridge University in 2005 and the Ph.D degree in electrical
engineering and computer science (EECS) from Massachusetts Institute of
Technology (MIT) in 2011. He is currently a Professor with the Department of Mathematics and the
Department of Electrical and Computer Engineering (ECE), National University of Singapore (NUS). His research interests include information theory,
machine learning, and statistical signal processing. 

He is an elected member
of the IEEE Information Theory Society Board of Governors. He was an
IEEE Information Theory Society Distinguished Lecturer from 2018 to 2019.
He received the MIT EECS Jin-Au Kong Outstanding Doctoral Thesis Prize
in 2011, the NUS Young Investigator Award in 2014, Singapore National
Research Foundation (NRF) Fellowship (Class of 2018), and the NUS Young
Researcher Award in 2019. He is also serving as a Senior Area Editor for
{\em IEEE Transactions on Signal Processing} and an Area Editor in
Shannon Theory and Information Measures for {\em  IEEE Transactions on  Information Theory}. He also regularly serves as an  Area Chair or Senior Area Chair for prominent
Machine Learning conferences, such as the International Conference on
Learning Representations (ICLR) and the Conference on Neural Information
Processing Systems (NeurIPS).
\end{IEEEbiographynophoto}

\begin{IEEEbiographynophoto}{Hamdi Joudeh} (Member, IEEE) received the M.Sc.\ and Ph.D.\ degrees in electrical engineering from Imperial College London in 2011 and 2016, respectively. From 2016 to 2020, he held research positions at Imperial College London and Technische Universität Berlin. He is currently an Associate Professor in the Department of Electrical Engineering at Eindhoven University of Technology (TU/e). His research interests include information theory, communications and signal processing.

Dr. Joudeh received a Starting Grant from the European Research Council (ERC) in 2023. He served as an Associate Editor for \emph{IEEE Transactions on Signal Processing} and for \emph{IEEE Communications Letters}.
\end{IEEEbiographynophoto}

\end{document}